\definecolor{gray1}{gray}{0.775}
\definecolor{gray2}{gray}{0.75}
\newcommand{\remove}[1]{}
\spnewtheorem{observation}{Observation}{\bfseries}{\itshape}
\spnewtheorem{claimN}{Claim}{\bfseries}{\itshape}
\spnewtheorem{remarkLI}{Remark}{\bfseries}{\itshape}
\newcommand{\tw}{\mbox{\bf tw}}
\renewenvironment{proof}{\noindent \textsc{Proof:}}{\hfill$\square$\medskip}
\newcommand{\bs}{-}
\newcommand{\sm}{\setminus}
\newcommand{\ls}{L_{sel}}
\newcommand{\rs}{R_{sel}}
\newcommand{\mA}{\mathcal{A}}
\newcommand{\PP}{\mathcal{P}}
\newcommand{\OO}{\mathcal{O}}
\newcommand{\QQ}{\mathcal{Q}}
\newcommand{\mR}{\mathcal{R}}
\newcommand{\mL}{\mathcal{L}}
\newcommand{\paraprobl}[4]
{
  \begin{flushleft}
    \fbox{
      \begin{minipage}{11.8cm}
        \noindent {\textsc {#1}}\\
        {\bf Input:} #2\\
        {\bf Parameter:} #4\\
        {\bf Output:} #3
      \end{minipage}
    }
  \end{flushleft}
}
\newcommand{\capt}[2]{\draw[white] (#1) -- node[above,sloped,black] {#2} +(2,0)}
\newcommand{\add}[1]{\textcolor{blue}{#1}}
\newcommand{\supprOK}[1]{}
\newcommand{\addOK}[1]{#1}
\title{Parameterized Complexity Dichotomy for $(r,\ell)$-\textsc{Vertex Deletion}\thanks{This work was partially supported by CNPq, CAPES, FAPERJ, and COFECUB.}
}
\author{Julien Baste~\inst{1}, Luerbio Faria~\inst{2}, Sulamita Klein~\inst{3}, and Ignasi Sau~\inst{1}}
\authorrunning{J. Baste, L. Faria, S. Klein, and I. Sau}
\titlerunning{Parameterized Complexity Dichotomy for $(r,\ell)$-\textsc{Vertex Deletion}}
\institute{AlGCo project team, CNRS, LIRMM, Montpellier, France.\\
\email{julien.baste@lirmm.fr, ignasi.sau@lirmm.fr}
\and FFP, Universidade do Estado do Rio de Janeiro, Rio de Janeiro, Brazil.\\
\email{luerbio@cos.ufrj.br}
\and Universidade Federal do Rio de Janeiro, Rio de Janeiro, Brazil.\\
\email{sula@cos.ufrj.br}}
\begin{document}

\maketitle
\setcounter{footnote}{0}


\begin{abstract}
For two integers $r, \ell \geq 0$, a graph $G = (V, E)$ is an \emph{$(r,\ell)$-graph} if $V$ can be partitioned into $r$ independent sets and $\ell$ cliques. In the parameterized $(r,\ell)$-\textsc{Vertex Deletion} problem, given a graph $G$ and an integer $k$, one has to decide whether at most $k$ vertices can be removed from $G$ to obtain an $(r,\ell)$-graph. This problem is {\sf NP}-hard if $r+\ell \geq 1$ and encompasses several relevant problems such as \textsc{Vertex Cover} and \textsc{Odd Cycle Transversal}. The parameterized complexity of $(r,\ell)$-\textsc{Vertex Deletion} was known for all values of $(r,\ell)$ except for $(2,1)$, $(1,2)$, and $(2,2)$. We prove that each of these three cases is {\sf FPT} and, furthermore, solvable in single-exponential time, which is asymptotically optimal in terms of $k$. We consider as well the version of $(r,\ell)$-\textsc{Vertex Deletion} where the set of vertices to be removed has to induce an independent set, and provide also a parameterized complexity dichotomy for this problem.

\vspace{0.19cm}
\noindent\textbf{Keywords:} graph modification problem; parameterized complexity; iterative compression; {\sf FPT}-algorithm; single-exponential algorithm.
\end{abstract}

\section{Introduction}
\label{sec:intro}
\textbf{Motivation}. Let $r, \ell \geq 0$  be two fixed integers. A graph $G = (V, E)$ is an \emph{$(r,\ell)$-graph} if $V$ can be partitioned into $r$ independent sets and $\ell$ cliques. In the parameterized $(r,\ell)$-\textsc{Vertex Deletion} problem, we are given a graph $G$ and an integer parameter $k$, and the task is to decide whether at most $k$ vertices can be removed from $G$ so that the resulting graph is an $(r,\ell)$-graph. The optimization version of this problem is known to be {\sf NP}-hard for $r + \ell \geq 1$ by a classical result of Lewis and Yannakakis~\cite{LeYa80}.  The $(r,\ell)$-\textsc{Vertex Deletion} problem has a big expressive power as it captures several relevant problems for particular cases of the pair $(r,\ell)$. Indeed, for instance, the case $(1,0)$ corresponds to \textsc{Vertex Cover}, the case $(2,0)$ to \textsc{Odd Cycle Transversal}, the case $(1,1)$ to \textsc{Split Vertex Deletion}, and the case $(3,0)$ to whether at most $k$ vertices can be removed so that the resulting graph is 3-colorable.

In this article we are interested in the parameterized complexity of $(r,\ell)$-\textsc{Vertex Deletion}; see~\cite{DF13,FG06,Nie06} for  introductory textbooks to the field. We just recall that a problem defined on an $n$-vertex graph is \emph{fixed-parameter tractable} ({\sf FPT} for short) with respect to a parameter $k$ if it can be solved in \emph{{\sf FPT}-time}, i.e., in time $f(k) \cdot n^{O(1)}$.
An {\sf FPT}-algorithm  that runs in time $2^{O(k)} \cdot n^{O(1)}$ is called \emph{single-exponential}. For the case of \textsc{Vertex Cover} (\textsc{VC} for short), a simple branching algorithm yields an {\sf FPT}-algorithm in time $2^{k} \cdot n^{O(1)}$. The currently fastest algorithm~\cite{CKX10} runs in time $1.27^k \cdot n^{O(1)}$. For \textsc{Odd Cycle Transversal} (\textsc{OCT} for short), the problem was not known to be {\sf FPT} until Reed \emph{et al}.~\cite{RSV04} introduced the celebrated technique of \emph{iterative compression} and solved \textsc{OCT} in time $3^{k} \cdot n^{O(1)}$. The current fastest algorithm~\cite{LNRRS14} uses linear programming and runs in time $2.31^{k} \cdot n^{O(1)}$.
The \textsc{Split Vertex Deletion} problem can be easily seen to be solvable in single-exponential time since split graphs can be characterized by a {\sl finite} set of forbidden induced subgraphs~\cite{FoHa77,Cai96}.
The current fastest algorithm is by Cygan and Pilipczuk~\cite{Cyg13} and runs in time $\OO(1.2738^kk^{\OO(\log k)}+n^3)$ using \textsc{Vertex Cover} as a subroutine.
It improves the previously fastest algorithm that runs in time $2^{k} \cdot n^{O(1)}$ and uses iterative compression~\cite{GK0MPRR15}, which in turn improves another algorithm using linear programming~\cite{LNRRS14} that runs in time $2.31^{k} \cdot n^{O(1)}$. (See also~\cite{KrNa13} for parameterized algorithms for $(r,\ell)$-\textsc{Vertex Deletion} on perfect graphs.)

Note that solving $(r,\ell)$-\textsc{Vertex Deletion} on a graph $G$ is equivalent to solving $(\ell,r)$-\textsc{Vertex Deletion} on the complement of $G$. This observation implies that the case $(0,2)$ can also be solved in time $2.31^{k} \cdot n^{O(1)}$. Note also that if $\max\{r,\ell\} \geq 3$, then $(r,\ell)$-\textsc{Vertex Deletion} is para-{\sf NP}-complete, hence unlikely to be {\sf FPT}, as for $k=0$ the problem corresponds to the recognition of $(r,\ell)$-graphs, which is {\sf NP}-complete if and only if $\max\{r,\ell\} \geq 3$~\cite{Bra96,FHKM03}.

Therefore, concerning the parameterized complexity of the $(r,\ell)$-\textsc{Vertex Deletion} problem on general graphs, the above discussion implies that the only open cases are $(2,1)$, $(1,2)$, and $(2,2)$. Note also that all the cases that are known to be {\sf FPT} can be solved in single-exponential time.

\vspace{.1cm}
\noindent\textbf{Our results}. In this article we prove that each of the above three open cases is {\sf FPT} and can also be solved in single-exponential time, thus completely settling the parameterized complexity of $(r,\ell)$-\textsc{Vertex Deletion}. That is, excluding the trivial case where $r + \ell = 0$, we obtain the following dichotomy: the problem is {\sf FPT} and solvable in single-exponential time  if $\max\{r,\ell\} \leq 2$, and para-{\sf NP}-complete otherwise. As discussed later, a single-exponential running time is asymptotically best possible in terms of $k$ unless the Exponential Time Hypothesis (ETH) fails. A summary of the parameterized complexity of $(r,\ell)$-\textsc{Vertex Deletion} is shown in Table~\ref{tab:VD}, where for each value of $(r,\ell)$, the name of the problem (if any), the function $f(k)$, and the appropriate references are given. We denote by $\overline{\textsc{VC}}$ and $\overline{\textsc{OCT}}$ the complementary problems of \textsc{VC} and \textsc{OCT}, respectively. The results of this article correspond to the gray boxes, `p-{\sf NP}-c' stands for `para-{\sf NP}-complete', and `{\sf P}' means that the corresponding problem is polynomial-time solvable\footnote{We would like to mention here that after  this article appeared in arXiv (\texttt{abs/1310.6205}), we learnt that Kolay and  Panolan  (\texttt{abs/1504.08120}, further published in~\cite{KP15}) obtained simultaneously and independently the same results that we present in Table~\ref{tab:VD} using very similar techniques.}.

\begin{table}[h]
\vspace{-.25cm}
\parbox{.45\linewidth}{
\begin{flushleft}
\begin{tabular}{|c||c|c|c|c|}
\hline
 &  &  &  &      \\[-.8ex]
3  & \scriptsize{p-{\sf NP}-c} & \scriptsize{p-{\sf NP}-c} & \scriptsize{p-{\sf NP}-c} & \scriptsize{p-{\sf NP}-c}     \\
 & \cite{Bra96} & \cite{Bra96} & \cite{Bra96} &  \cite{Bra96}    \\
\hline
 & $\overline{\textsc{OCT}}$ & \cellcolor{gray1} & \cellcolor{gray1} &      \\
2  & $2.31^{k}$ & $\cellcolor{gray1}3.31^k$ & $\cellcolor{gray1}3.31^k$ &  \scriptsize{p-{\sf NP}-c}    \\
 & \cite{LNRRS14}  & \cellcolor{gray1}Coro~\ref{coro:vd} & \cellcolor{gray1}Thm~\ref{lemma:dttvd} &    \cite{Bra96}  \\
\hline
 & $\overline{\textsc{VC}}$ & \textsc{Split D.} & \cellcolor{gray1}  &      \\
1  & $1.27^{k}$ & $2^k$ & $\cellcolor{gray1}3.31^k$ & \scriptsize{p-{\sf NP}-c}     \\
 & \cite{CKX10} & \cite{GK0MPRR15} & \cellcolor{gray1}Coro~\ref{coro:vd} &  \cite{Bra96}    \\
 \hline
 &  & \textsc{VC} & \textsc{OCT} &      \\
0  & {\sf P}  & $1.27^{k}$ & $2.31^{k}$ & \scriptsize{p-{\sf NP}-c}     \\
 & \scriptsize{trivial} & \cite{CKX10} & \cite{LNRRS14} &  \cite{Bra96}    \\
\hline
\hline
\tiny{\slashbox{\small{$\ell$}}{\small{$r$}}} &  0 & 1  & 2 & 3    \\
\hline

\end{tabular}
\end{flushleft}
\caption{\label{tab:VD}Summary of results for the \textsc{$(r,\ell)$-Vertex Deletion} problem. Our results correspond to gray cells.}
}
\hspace{1.25cm}
\parbox{.45\linewidth}{
\begin{center}
\begin{tabular}{|c||c|c|c|c|}
\hline
 &  &  &  &      \\[-.8ex]
3  & \scriptsize{p-{\sf NP}-c} & \scriptsize{p-{\sf NP}-c} & \scriptsize{p-{\sf NP}-c} & \scriptsize{p-{\sf NP}-c}     \\
 & \cite{Bra96} & \cite{Bra96} & \cite{Bra96} &  \cite{Bra96}    \\
\hline
 & \cellcolor{gray1} & \cellcolor{gray1} & \cellcolor{gray1} &      \\[-1.2ex]
2  & \cellcolor{gray1}{\sf P} & \cellcolor{gray1}{\sf P} & \cellcolor{gray1}$2^{2^{O(k^2)}}$ &  \scriptsize{p-{\sf NP}-c}    \\
 & \cellcolor{gray1}Thm~\ref{thm:easy}  & \cellcolor{gray1}Thm~\ref{thm:easy} & \cellcolor{gray1}Thm~\ref{theorem:ittvd} &    \cite{Bra96}  \\
\hline
 & \cellcolor{gray1} & \cellcolor{gray1} & \cellcolor{gray1}  &      \\[-1.2ex]
1  & \cellcolor{gray1}{\sf P} & \cellcolor{gray1}{\sf P} & \cellcolor{gray1}$2^{2^{O(k^2)}}$ & \scriptsize{p-{\sf NP}-c}     \\
 & \cellcolor{gray1}Thm~\ref{thm:easy} & \cellcolor{gray1}Thm~\ref{thm:easy} & \cellcolor{gray1}Coro~\ref{theorem:itovd} &  \cite{Bra96}    \\
 \hline
 &  & \cellcolor{gray1}\textsc{IVC} & \textsc{IOCT} &      \\
0  & {\sf P}  & \cellcolor{gray1}{\sf P} & $2^{2^{O(k^2)}}$ & \scriptsize{p-{\sf NP}-c}     \\
 & \scriptsize{trivial} & \cellcolor{gray1}Thm~\ref{thm:easy} & \cite{MOR10} &  \cite{Bra96}    \\
\hline
\hline
\tiny{\slashbox{\small{$\ell$}}{\small{$r$}}} &  0 & 1  & 2 & 3    \\
\hline
\end{tabular}
\end{center}
\caption{\label{tab:IVD}Results for \textsc{Independent $(r,\ell)$-Vertex Deletion}. Our results correspond to gray cells.}
}
\vspace{-.3cm}
\end{table}

We also consider the version of $(r,\ell)$-\textsc{Vertex Deletion} where the set $S$ of at most $k$ vertices to be removed has to further satisfy that $G[S]$ is an independent set. We call this problem \textsc{Independent $(r,\ell)$-Vertex Deletion}. Note that, in contrast to \textsc{$(r,\ell)$-Vertex Deletion}, the cases $(r,\ell)$ and $(\ell,r)$ may not be symmetric anymore. This problem has received little attention in the literature and, excluding the most simple cases, to the best of our knowledge only the case $(2,0)$ has been studied by Marx \emph{et al}.~\cite{MOR10}, who proved it to be {\sf FPT}.  Similarly to \textsc{$(r,\ell)$-Vertex Deletion},  the problem is para-{\sf NP}-complete if $\max\{r,\ell\} \geq 3$. As an additional motivation for studying this problem, note that solving \textsc{Independent $(r,\ell)$-Vertex Deletion} on an input $(G,k)$ corresponds exactly to deciding whether $G$ is an $(r+1,\ell)$-graph where one of the independent sets has size at most $k$.

We manage to provide a complete characterization of the parameterized complexity of \textsc{Independent $(r,\ell)$-Vertex Deletion}. The complexity landscape turns out to be richer than the one for \textsc{$(r,\ell)$-Vertex Deletion}, and one should rather speak about a trichotomy: the problem is polynomial-time solvable if $r\leq 1$ and $\ell\leq 2$, {\sf NP}-hard and {\sf FPT} if $r = 2$ and $\ell \leq 2$, and para-{\sf NP}-complete otherwise. In particular, as discussed at the end of the previous paragraph, it follows from our results that for $\ell \in \{0,1,2\}$, the recognition of the class of $(3,\ell)$-graphs such that one of the independent sets has size at most $k$ is in {\sf FPT} with parameter $k$. A summary of the complexity of \textsc{Independent $(r,\ell)$-Vertex Deletion} is shown in Table~\ref{tab:IVD}, where our results correspond  to the gray boxes. We would like to note that some of the polynomial cases, such as the case $(1,0)$, are not difficult to prove and may be already known, although we are not aware of it.

\vspace{.1cm}
\textbf{Our techniques}.
As most of the previous work mentioned before, our algorithms for \textsc{$(r,\ell)$-Vertex Deletion} (Section~\ref{sec:rlvd}) are based on iterative compression. We provide an algorithm for \textsc{$(2,2)$-Vertex Deletion}, and we show that  \textsc{$(1,2)$-Vertex Deletion} and \textsc{$(2,1)$-Vertex Deletion} can be easily reduced to \textsc{$(2,2)$-Vertex Deletion}.
For completeness, we include in Section~\ref{ap:iterative-compression} some well-known properties of iterative compression. As a crucial ingredient in our algorithms, we prove (Lemma~\ref{lemma:rlpartition} in Section~\ref{sec:prelim})
\supprOK{that an $n$-vertex $(r,\ell)$-graph has at most $(n+1)^{2 r  \ell}$  distinct $(r,\ell)$-partitions,}\addOK{that given two $(r,\ell)$-partitions of an $n$-vertex $(r,\ell)$-graph, these two $(r,\ell)$-partitions differ by at most $2r\ell$ vertices,}
where an  \emph{$(r,\ell)$-partition} of an $(r,\ell)$-graph $G$ is a partition $(R, L)$ of $V(G)$ such that $G[R]$ is an $(r,0)$-graph and $G[L]$ is a $(0,\ell)$-graph.
\supprOK{Furthermore, all these partitions can be generated in polynomial time if $\max\{r, \ell\} \leq 2$.}\addOK{Furthermore, if $\max\{r, \ell\} \leq 2$, it is known that we can find an $(r, \ell)$-partition of an $(r,\ell)$-graph in polynomial time~\cite{Bra96}.} This implies, in particular, that an $n$-vertex $(r,\ell)$-graph has at most $(n+1)^{2 r  \ell}$  distinct $(r,\ell)$-partitions that can be generated in polynomial time if $\max\{r, \ell\} \leq 2$.
 This result generalizes the fact that a split graph has at most $n + 1$ split partitions~\cite{Golumbic04}, which was used in the algorithms of~\cite{GK0MPRR15}.

Our algorithms for \textsc{Independent $(r,\ell)$-Vertex Deletion} (Section~\ref{sec:irlvd}) are slightly more involved, and do not explicitly use iterative compression. Again, we provide an algorithm for \textsc{Independent $(2,2)$-Vertex Deletion} and then show that \textsc{Independent $(2,1)$-Vertex Deletion} can be reduced to \textsc{Independent $(2,2)$-Vertex Deletion}. We make use our algorithms for \textsc{$(2,2)$-Vertex Deletion}  to obtain a set of vertices $S$ that allows us to exploit the structure of $G-S$. A crucial ingredient here is the {\sf FPT}-algorithm of Marx \emph{et al}.~\cite{MOR10} to solve the \textsc{Restricted Independent OCT} problem (see Section~\ref{sec:prelim} for the definition).

\vspace{.1cm}
\textbf{Remarks and further research}. Having completely settled the parameterized complexity of  \textsc{$(r,\ell)$-Vertex Deletion} and  \textsc{Independent $(r,\ell)$-Vertex Deletion}, a natural direction is to improve the running times of our algorithms. We did not focus in this article on optimizing the degree of the polynomial $n^{O(1)}$ involved in our running times. Concerning the function $f(k)$, for \textsc{$(r,\ell)$-Vertex Deletion} this improvement would be possible, under ETH, only in the basis of the function $3.31^k$ (see Theorem~\ref{theorem:nsevd}). For \textsc{Independent $(r,\ell)$-Vertex Deletion}, there may be room for improvement in the function $2^{2^{O(k^2)}}$ that we obtain mainly by analyzing the running time of the algorithm of Marx \emph{et al}.~\cite{MOR10} to solve \textsc{Restricted Independent OCT}, which was not explicit in their article.

Concerning the existence of polynomial kernels for \textsc{$(r,\ell)$-Vertex Deletion}, a challenging research avenue is to apply the techniques used by
Kratsch and Wahlstr{\"{o}}m~\cite{KrWa14} for obtaining a randomized polynomial kernel for \textsc{OCT} to the cases $(2,1)$, $(1,2)$, and $(2,2)$, or to prove that these problems do not admit polynomial kernels. The ideas for the case $(1,1)$ may also be helpful~\cite{GK0MPRR15}.

Finally, it is worth mentioning that if the input graph is restricted to be \emph{planar}, there exists a randomized subexponential algorithm for \textsc{OCT}~\cite{LSW12} running in time $O(n^{O(1)} + 2^{O(\sqrt{k} \log k)}n)$. As in a planar graph any  clique is of size at most $4$, by guessing one or two cliques and then applying this algorithm, we obtain randomized algorithms in time $2^{O(\sqrt{k} \log k)}\cdot n^{O(1)}$ for
\textsc{$(2,1)$-Vertex Deletion}, {\textsc{$(1,2)$-Vertex Deletion}}, and {\textsc{$(2,2)$-Vertex Deletion}} on planar graphs.

\section{Preliminaries}
\label{sec:prelim}

We use standard graph-theoretic notation, and  the reader is referred to~\cite{Diestel05} for any undefined term.
All the graphs we consider are undirected and contain neither loops nor multiple edges. If $S \subseteq V(G)$, we define $G-S = G[V(G) \setminus S]$.
The \emph{complement} of a graph $G=(V,E)$ is denoted by \emph{$\overline{G}$}, that is, $\overline{G} = (V,E')$ with $E' = \{\{x,y\}\in (V\times V)\sm E \}$.
Throughout the article $n$ denotes the number of vertices of the input graph of the problem under consideration.

It is shown in~\cite{Golumbic04} that a $(1,1)$-graph has at most $n+1$ distinct $(1,1)$-partitions. We generalize this property in the following lemma, whose proof is based on the proof of~\cite[Theorem 3.1]{FHKM03}.



\begin{lemma}
\label{lemma:rlpartition}
Let $r$ and $\ell$ be two fixed integers, and let $(R,L)$ and $(R', L')$ be two $(r,\ell)$-partitions of a graph $G$.
Let $\ls= L' \cap R$ and $\rs = R' \cap L$.
Then $\ls$ and $\rs$ are both of size at most $r\ell$, $R' = (R \sm \ls) \cup \rs$, and $L' = (L \sm \rs) \cup \ls$.
\end{lemma}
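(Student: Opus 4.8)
The plan is to separate the statement into two essentially independent parts: the two set-equalities, which are purely set-theoretic bookkeeping, and the cardinality bounds, which carry the combinatorial content.

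First I would establish the identities $R' = (R \sm \ls) \cup \rs$ and $L' = (L \sm \rs) \cup \ls$ using only that both $(R,L)$ and $(R',L')$ partition $V(G)$. Since $(R,L)$ is a partition, $R' = R' \cap (R \cup L) = (R' \cap R) \cup (R' \cap L) = (R' \cap R) \cup \rs$. Since $(R',L')$ is a partition, $R = R \cap (R' \cup L') = (R \cap R') \cup (R \cap L') = (R \cap R') \cup \ls$, and as these last two sets are disjoint this gives $R' \cap R = R \sm \ls$. Combining the two yields $R' = (R \sm \ls) \cup \rs$, and the identity for $L'$ follows by exchanging the roles of $R$ and $L$ (equivalently, of the two partitions). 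This part needs nothing beyond both pairs being partitions of the same vertex set.

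The heart of the argument is the bound $|\ls| \leq r\ell$, the bound on $\rs$ being symmetric. The key observation is that $\ls = L' \cap R$ lies simultaneously inside $R$ and inside $L'$. As an induced subgraph of $G[R]$, which is an $(r,0)$-graph, $G[\ls]$ inherits a partition into $r$ independent sets $I_1, \dots, I_r$ (just restrict the classes of the given partition); as an induced subgraph of $G[L']$, which is a $(0,\ell)$-graph, $G[\ls]$ inherits a partition into $\ell$ cliques $C_1, \dots, C_\ell$. I would then map each vertex $v \in \ls$ to the unique pair $(i,j) \in \{1,\dots,r\} \times \{1,\dots,\ell\}$ with $v \in I_i \cap C_j$. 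This map is injective: if two distinct vertices $u,v$ were sent to the same pair $(i,j)$, then lying in the common independent set $I_i$ would force $uv \notin E(G)$, while lying in the common clique $C_j$ would force $uv \in E(G)$, a contradiction. Hence $|\ls| \leq r\ell$, and the same reasoning applied to $\rs = R' \cap L$, which lies in both $R'$ (an $(r,0)$-graph) and $L$ (a $(0,\ell)$-graph), gives $|\rs| \leq r\ell$.

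I do not expect a genuine obstacle here. The one point deserving care is to take the independent-set and clique partitions for the induced subgraph on $\ls$ itself, obtained by restricting the classes of the two given $(r,\ell)$-partitions, so that adjacency is preserved and the injectivity argument is valid. The injective embedding of $\ls$ into the $r \times \ell$ grid is the single idea that makes the bound both tight and transparent.
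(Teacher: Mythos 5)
Your proof is correct and takes essentially the same approach as the paper: the bound $|\ls| \leq r\ell$ rests on $\ls$ lying simultaneously in the $r$-colorable graph $G[R]$ and in the graph $G[L']$ coverable by $\ell$ cliques, and your injection of $\ls$ into the $r\times\ell$ grid of pairs is just a repackaging of the paper's pigeonhole argument (a set of $r\ell+1$ vertices of $L'$ would force a clique of size $r+1$ inside the $(r,0)$-graph $G[R]$). The only difference is cosmetic: you spell out the set identities $R' = (R \sm \ls) \cup \rs$ and $L' = (L \sm \rs) \cup \ls$, which the paper treats as immediate from both pairs being partitions of $V(G)$.
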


\begin{proof}  Let $G=(V,E)$ be an $(r,\ell)$-graph,
and let $(R, L)$ and $(R', L')$ be two distinct $(r, \ell)$-partitions of $G$.
We claim that $|R \cap L'| \leq r \ell$.
Indeed, assume that there exists a set $S$ of $r \ell + 1$ vertices in $R \cap L'$.
As $S \subseteq L'$, by the pigeonhole principle there exists a subset $S' \subseteq S$ of size $r+1$ such that $G[S']$ is a clique.
As $S'\subseteq R$,  the $(r,0)$-graph $G[R]$ contains a clique $G[S']$ of size $r+1$, that contradict the definition of an $(r,0)$-graph.
Symmetrically, it also holds that  $|R' \cap L| \leq r \ell$, and the lemma follows.
\end{proof}


For our algorithms we need the following restricted versions of \textsc{OCT}.

\paraprobl
{\textsc{Restricted Odd Cycle Transversal} (\textsc{Restricted OCT})}
{A graph $G=(V,E)$, a set $D \subseteq V$, and an integer $k$.}
{A set $S \subseteq D$ of size at most $k$ such that $G \bs S$ is bipartite or a correct report that such a set does not exist. }
{$k$.}

\vspace{-.4cm}
\paraprobl
{\textsc{Restricted Independent Odd Cycle Transversal}}
{A graph $G=(V,E)$, a set $D \subseteq V$, and an integer $k$.}
{An independent set $S \subseteq D$ of size at most $k$ such that $G \bs S$ is bipartite or a correct report that such a set does not exist. }
{$k$.}

\begin{lemma}
\label{lemma:roct}
\textsc{Restricted OCT} can be solved in time $2.31^{k}\cdot n^{O(1)}$.
\end{lemma}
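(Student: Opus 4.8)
The plan is to reduce \textsc{Restricted OCT} to ordinary \textsc{OCT} by transforming the instance so that the vertices outside $D$ become \emph{undeletable}, and then to invoke the $2.31^{k}\cdot n^{O(1)}$ algorithm for \textsc{OCT} of~\cite{LNRRS14}. The natural way to make a vertex $v\notin D$ undeletable in an \textsc{OCT} instance with budget $k$ is to replace it by $k+1$ pairwise non-adjacent copies $v_1,\ldots,v_{k+1}$, each with the same neighborhood as $v$ (that is, \emph{false twins}); call the resulting graph $G'$. Since any solution of size at most $k$ can miss at most $k$ of these $k+1$ copies, at least one copy of every $v\notin D$ always survives, and this is exactly what will forbid the deletion of $v$.

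First I would record the elementary fact that false twins can always share a colour in a proper $2$-colouring: if $H$ is bipartite with $2$-colouring $c$ and $u$ is a false twin of $w$ in $H$, then setting $c(u)=c(w)$ keeps the colouring proper, because $N(u)=N(w)$ and $u\not\sim w$. This single observation drives both directions of the equivalence.

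For the forward direction, if $S\subseteq D$ is a restricted solution, then no copy lies in $S$, so all copies survive in $G'\bs S$; colouring every copy $v_i$ with the colour of $v$ in a $2$-colouring of $G\bs S$ shows that $G'\bs S$ is bipartite, so $S$ is an \textsc{OCT} of $G'$ of the same size. The harder direction, which I expect to be the main point, is to show that any \textsc{OCT} $S'$ of $G'$ with $|S'|\leq k$ can be turned into one that uses no copy at all. The budget argument guarantees that for each $v\notin D$ at least one copy $v_i\notin S'$; discarding from $S'$ all copies then yields a set $S''$, and reinserting each deleted copy $v_j$ into $G'\bs S''$ and colouring it like the surviving twin $v_i$ keeps the graph bipartite by the colouring observation. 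Hence $S''$ is still an \textsc{OCT}, it is contained in $D$, and $|S''|\leq|S'|\leq k$; mapping it back to $G$ gives a restricted solution.

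Finally I would bound the cost. The graph $G'$ has at most $n(k+1)$ vertices, so running the algorithm of~\cite{LNRRS14} on $(G',k)$ takes time $2.31^{k}\cdot (nk)^{O(1)}$, which is $2.31^{k}\cdot n^{O(1)}$ since we may assume $k\leq n$. As the reduction and the equivalence are constructive, the actual set $S\subseteq D$ can be recovered, matching the search formulation of the problem.
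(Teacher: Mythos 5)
Your proposal is correct and is essentially the paper's own proof: the paper performs exactly the same reduction, replacing each vertex $v\notin D$ by $k+1$ non-adjacent false-twin copies and running the $2.31^{k}\cdot n^{O(1)}$ \textsc{OCT} algorithm of Lokshtanov \emph{et al}. on the resulting graph. In fact, your write-up is more complete than the paper's, which only asserts that the copying makes vertices outside $D$ undeletable, whereas you spell out both directions of the equivalence via the twin-colouring observation.
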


\begin{proof}
The algorithm from Lokshtanov \emph{et al.}~\cite{LNRRS14} solves \textsc{OCT} in time $2.31^{k}\cdot n^{O(1)}$.
For our lemma, we use this algorithm on a modified input. Let $G = (V,E)$ be a graph,  $D \subseteq V$, and  $k$ an integer.
We want to solve \textsc{Restricted OCT} on $(G,D,k)$.
Let $G'=(V',E')$, where $V' = D \cup \{v_i : v \in V\sm D, i \in \{0,\ldots, k\}\}$ and $E' = (E \cap (D\times D)) \cup \{\{v_i,w\} : v \in V\sm D, i \in \{0,\ldots, k\}, w \in D, \{v,w\}\in E\} \cup \{\{v_i,w_j\} : v,w \in V \setminus D, i,j \in \{0,\ldots, k\}, \{v,w\}\in E\}$.
That is, for each vertex $v$ not in $D$, we make $k+1$ copies of $v$ with the same neighborhood as  $v$, making its choice for the solution impossible. Then we solve \textsc{Odd Cycle Transversal} on $(G',k)$, giving us a solution of \textsc{Restricted OCT} on $(G,D,k)$.
\end{proof}

By looking carefully at the proof of \cite[Theorem 4.3]{MOR10}, we have the following theorem. We will analyze the running time of the algorithm in Subsection~\ref{sec:time}.

\begin{theorem}[Marx \emph{et al}.~\cite{MOR10}]
\label{theorem:roct}
\textsc{Restricted Independent OCT} is {\sf FPT}.
\end{theorem}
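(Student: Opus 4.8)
The plan is to derive Theorem~\ref{theorem:roct} from the fact, established as \cite[Theorem 4.3]{MOR10}, that \textsc{Independent OCT} (that is, the case $D = V$ of the present problem) is {\sf FPT}. The only genuinely new feature is the restriction $S \subseteq D$, and I would handle it exactly as the containment $S \subseteq D$ was handled for \textsc{Restricted OCT} in Lemma~\ref{lemma:roct}, namely by making the vertices outside $D$ impossible to use in any solution of size at most $k$ via vertex cloning. Concretely, I build $G'$ from $G$ by replacing every vertex $v \in V \sm D$ with $k+1$ pairwise non-adjacent copies $v_0, \ldots, v_k$, each having the same neighbourhood as $v$ (so copies of adjacent vertices are joined completely), while leaving $G[D]$ untouched, and then call the \textsc{Independent OCT} algorithm on $(G', k)$.

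The heart of the argument is to check that this cloning is faithful with respect to \emph{both} the bipartiteness requirement and the independence requirement, which is the step I expect to need the most care. First, if $(G,D,k)$ is a yes-instance then so is $(G',k)$: given an independent $S \subseteq D$ with $G \bs S$ bipartite, $S$ is still independent in $G'$ (since $G'[D] = G[D]$), and placing every copy $v_i$ on the side of the bipartition of $G \bs S$ occupied by $v$ yields a proper $2$-colouring of $G' \bs S$, because false twins are non-adjacent and adjacent vertices are cloned into complete bipartite groups. Conversely, let $S'$ be an independent solution for $(G',k)$. Since $|S'| \le k$, for each $v \in V \sm D$ the set $S'$ omits at least one copy $v_j$; using that $v_j$ is a false twin of every other copy $v_i$, one may delete $v_i$ from $S'$ and re-insert $v_i$ into the bipartite graph $G' \bs S'$ on the side of the surviving copy $v_j$, so that $S' \sm \{v_i\}$ is again an independent solution. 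Iterating removes all cloned vertices and produces a solution $S'' \subseteq D$ with $|S''| \le k$; restricting the bipartition of $G' \bs S''$ to the induced copy of $G$ shows that $G \bs S''$ is bipartite. Hence the two instances are equivalent, and since $G'$ has $O(kn)$ vertices, feeding it to the \textsc{Independent OCT} algorithm of \cite{MOR10} solves \textsc{Restricted Independent OCT} in {\sf FPT}-time.

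Alternatively, and this is the route that yields the explicit running time analysed in Subsection~\ref{sec:time}, one can open the proof of \cite[Theorem 4.3]{MOR10} directly. Their algorithm proceeds by treewidth reduction, producing an equivalent instance of treewidth bounded by a function of $k$ on which the target property is expressed in monadic second-order logic and solved by dynamic programming. The restriction $S \subseteq D$ is then simply an extra constraint $\forall x\, (x \in S \Rightarrow x \in D)$ over the marked set $D$, so the same machinery applies verbatim; the point to verify is that the treewidth-reduction step preserves membership in $D$ for every vertex that can take part in a minimal solution, which holds because such vertices survive in the reduced bounded-treewidth skeleton. I would reserve the quantitative bound $2^{2^{O(k^2)}}$ for the dedicated running-time analysis, as the present statement only asserts fixed-parameter tractability.
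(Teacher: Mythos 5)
Your proposal is correct and matches the paper's own treatment: the paper justifies Theorem~\ref{theorem:roct} by invoking Marx \emph{et al.}'s algorithm for \textsc{Independent OCT} and handles the restriction $S \subseteq D$ with exactly the vertex-cloning trick of Lemma~\ref{lemma:roct} (``making enough copies of each `undesired' vertex''), which is your first route, while your second route (opening up the treewidth-reduction proof of \cite[Theorem 4.3]{MOR10}) is precisely what the paper does in Subsection~\ref{sec:time} to extract the explicit $2^{2^{O(k^2)}}\cdot n^{O(1)}$ bound. Your verification that cloning is faithful to both the bipartiteness and the independence constraints is sound and fills in details the paper leaves implicit.
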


We will also need to deal with the {\textsc{Independent Vertex Cover}  problem, which given a graph $G$ and an integer $k$, asks whether $G$ contains a set $S \subseteq V(G)$ of size at most $k$ that is both a vertex cover of $G$ and an independent set.


\begin{lemma}
\label{lemma:ivc}
{\textsc{Independent Vertex Cover}} can be solved in linear time.
\end{lemma}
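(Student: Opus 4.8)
The plan is to observe that \textsc{Independent Vertex Cover} is essentially asking for a very restricted kind of vertex cover, and that the independence requirement collapses the problem to a purely structural condition on $G$. Recall that $S \subseteq V(G)$ is a vertex cover if and only if $V(G) \setminus S$ is an independent set. So if $S$ is simultaneously a vertex cover and an independent set, then $V(G)$ is partitioned into two independent sets $S$ and $V(G) \setminus S$, which means precisely that $G$ is bipartite. Conversely, if $G$ is bipartite, then each of its two sides is an independent set that is also a vertex cover. Hence the existence of \emph{some} independent vertex cover is equivalent to $G$ being bipartite, regardless of the bound $k$.

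First I would test whether $G$ is bipartite, which can be done in linear time by a standard BFS/DFS $2$-coloring. If $G$ is not bipartite, I report that no independent vertex cover exists. If $G$ is bipartite, let $A$ and $B$ be the two color classes (computed during the same traversal), handling each connected component independently. For a connected bipartite graph the bipartition is unique up to swapping the two sides, so the only candidate independent vertex covers that come from a single side are $A$ and $B$ themselves; across components one may independently choose the smaller side of each component. The minimum independent vertex cover is therefore obtained by summing, over all connected components, the size of the smaller of the two color classes. I would then compare this minimum against $k$ and answer \yes if and only if it is at most $k$.

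The main point to justify carefully is that no independent vertex cover smaller than this per-component minimum can exist. This follows because within a connected bipartite component any independent vertex cover $S$ must, by the equivalence above, induce a bipartition of the component into $S$ and its complement; since the bipartition of a connected bipartite graph is unique, $S$ must be exactly one of the two color classes, so the smaller one is optimal. I do not expect any real obstacle here: both the bipartiteness test and the accumulation of per-component class sizes run in $O(n + m)$ time, which is linear in the size of $G$, giving the claimed bound.
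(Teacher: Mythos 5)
Your proposal is correct and takes essentially the same approach as the paper: test bipartiteness in linear time, take the smaller color class of each connected component, and compare the total against $k$. The only difference is in how the key structural fact is justified --- you derive that any solution must restrict to a full color class on each component from the vertex-cover/independent-set complementation together with the uniqueness of the $2$-coloring of a connected bipartite graph, whereas the paper argues via alternating paths that a solution cannot meet both sides of a component; both justifications are sound.
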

\begin{proof}
Let $G$ be a graph and let $k$ be a positive integer. Note that for \textsc{Independent Vertex Cover} to admit a solution in $G$, in particular $G$ needs to be 2-colorable. Hence, if $G$ is not bipartite, we can directly conclude that {\textsc{Independent Vertex Cover}} on $G$ has no solution.

So we may assume that $G = (V,E)$ is a bipartite graph, and we proceed to construct a solution $S$ of minimum size.  For each connected component of $G$, we define $(B_1,B_2)$ as the unique bipartition of its vertex set such that $|B_1| < |B_2|$ and $B_1$ and $B_2$ are two independent sets. (If $|B_1| = |B_2|$, we arbitrarily choose $B_1$ being one of them and $B_2$ being the other one.) Note that $S$ cannot contain vertices in both $B_1$ and $B_2$, since in that case by connectivity there would exist an alternating path in $G$ with only the endvertices in $S$, and then either there is an edge between both endvertices (contradicting the fact that $S$ should be an independent set), or some edge in the path does not contain vertices in $S$ (contradicting the fact that $S$ should be a vertex cover). Thus, if $S$ is a minimum-size solution, necessarily $S \cap (B_1 \cup B_2) = B_1$. (If $|B_1|=|B_2|$, then $S \cap (B_1 \cup B_2)$ is equal to either $B_1$ or $B_2$, and we assume without loss of generality that the former case holds.) Therefore, we start with $S = \emptyset$, and for each connected component of $G$, we add each element of $B_1$ to  $S$. After exploring the whole graph, if $|S| \leq k$ then we return $S$, otherwise we report that no such a set exists.\end{proof}

We would like to note that Theorem~\ref{thm:easy} in Section~\ref{sec:irlvd} generalizes Lemma~\ref{lemma:ivc} above.



The following simple lemma will be exhaustively used in the following sections. A problem $\Pi_1$ is \emph{polynomial-time reducible} to a problem $\Pi_2$ if there exists a polynomial-time algorithm that transforms an instance $I_1$ of $\Pi_1$ into an instance $I_2$ of $\Pi_2$ such that $I_1$ is a \textsc{Yes}-instance if and only if $I_2$ is.

\begin{lemma}
\label{lemma:rl}
Let $r$ and $\ell$ be two positive integers. Then
\begin{itemize}
\item[(i)] \textsc{$(r,\ell)$-Vertex Deletion} is polynomial-time reducible to \textsc{$(r,\ell+1)$-Vertex Deletion},
\item[(ii)] \textsc{$(r,\ell)$-Vertex Deletion} is polynomial-time reducible to \textsc{$(r+1,\ell)$-Vertex Deletion},
\item[(iii)] \textsc{Independent $(r,\ell)$-Vertex Deletion} is polynomial-time reducible to \textsc{Independent $(r,\ell+1)$-Vertex Deletion}, and
\item[(iv)] \textsc{Independent $(r,\ell)$-Vertex Deletion} is polynomial-time reducible to \textsc{Independent $(r+1,\ell)$-Vertex Deletion}.
\end{itemize}
Furthermore, in each of the above reductions the parameter remains unchanged. 
\end{lemma}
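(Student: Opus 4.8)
The plan is to give, for each reduction, a single gadget construction and then to argue correctness in two directions, the non-trivial one being the backward implication. For items (i) and (iii), given an instance $(G,k)$ I would build $G'$ by adding to $G$, as a new connected component, a clique $K$ on $k+r+1$ vertices with no edges to $V(G)$; the parameter stays $k$. For items (ii) and (iv), I would instead add to $G$ a \emph{new independent set} $I$ on $k+\ell+1$ vertices, with every vertex of $I$ made adjacent to every vertex of $G$ (a join). In all cases the construction adds $O(k)$ vertices, runs in polynomial time, and leaves $k$ unchanged, so it only remains to prove the equivalence of the instances. Note that for the two Independent variants I cannot route the proof through complementation (the observation $(r,\ell)\leftrightarrow(\ell,r)$ on $\overline{G}$), because the constraint ``$G[S]$ is an independent set'' is not preserved by taking complements; hence the two different constructions above and a direct argument are really needed.

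The forward direction is straightforward: if $G-S$ is an $(r,\ell)$-graph witnessed by a partition into $r$ independent sets and $\ell$ cliques, then in case (i)/(iii) I place $K$ into the fresh $(\ell+1)$-st clique, and in case (ii)/(iv) I place $I$ into the fresh $(r+1)$-st independent set, obtaining the required partition of $G'-S$; the edges between the gadget and $G-S$ are irrelevant since each part need only be internally independent or a clique. Since $S\subseteq V(G)$ it remains independent in $G'$ for the Independent variants.

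The heart of the proof is the backward direction, and this is where I expect the main difficulty to lie: I must show the added gadget is ``forced'' to occupy the extra slot, so that $G$ itself still receives only $r$ independent sets and $\ell$ cliques. Fix a solution $S'$ with $|S'|\le k$ and a partition of $G'-S'$, and write $A = G-(S'\cap V(G))$. In the clique case, $M := K\setminus S'$ is a clique with $|M|\ge (k+r+1)-k = r+1$; the two facts I would use are that an independent set of the partition meets $M$ in at most one vertex (so at most $r$ vertices of $M$ land in independent sets, leaving at least one in a clique) and that a clique of the partition cannot contain both a vertex of $M$ and a vertex of $A$ (no edges between $K$ and $G$). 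Together these force at least one clique of the partition to lie entirely inside $M$, so the vertices of $A$ are covered by at most $r$ independent sets and the remaining $\ell$ cliques, i.e.\ $A$ is an $(r,\ell)$-graph. The independent-set case is exactly dual: $J := I\setminus S'$ satisfies $|J|\ge (k+\ell+1)-k=\ell+1$, a clique of the partition meets $J$ in at most one vertex, and an independent set of the partition cannot contain both a vertex of $J$ and a vertex of $A$ (every $I$--$G$ pair is adjacent by the join), forcing one independent set to lie entirely inside $J$ and leaving $A$ an $(r,\ell)$-graph.

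Finally, $S'\cap V(G)$ has size at most $k$ and, since $G'[V(G)]=G$, it is independent in $G$ whenever $S'$ is independent in $G'$; so it is a valid solution for the Independent variants (iii) and (iv) as well. The only point requiring care is the choice of gadget size: it must guarantee $|M|>r$ (resp.\ $|J|>\ell$) even after up to $k$ deletions, which is exactly why I take $k+r+1$ (resp.\ $k+\ell+1$) vertices. Once that size bound is in place, the counting in the previous paragraph goes through uniformly, settling all four items.
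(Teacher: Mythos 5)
Your proposal is correct, and for items (i), (iii), and (iv) it is essentially the paper's own construction: a disjoint clique of size $k+r+1$ for the clique-slot reductions, and an independent set joined completely to $V(G)$ for the independent-set-slot reduction, with the same counting argument (after at most $k$ deletions the surviving gadget is too large to be absorbed by the wrong type of parts, and it cannot share a part with a surviving vertex of $G$, so it occupies the fresh slot). The one place where you genuinely diverge is item (ii): the paper does not build a gadget at all, but observes that solving \textsc{$(r,\ell)$-Vertex Deletion} on $G$ is equivalent to solving \textsc{$(\ell,r)$-Vertex Deletion} on $\overline{G}$, and then invokes item (i) on the complement. That route is shorter and reuses (i) verbatim, at the price of replacing the instance by its complement; your direct join-gadget instead keeps $G$ as an induced subgraph of the new instance and gives a uniform, self-dual treatment of all four items -- which is in fact forced for item (iv), where (as you correctly note, and as the paper also implicitly acknowledges by handling (iv) directly) complementation destroys the independence constraint on the solution. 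One minor difference of economy rather than correctness: for item (iii) the paper exploits the fact that an independent solution can delete at most one vertex of the added clique, so a clique of size $r+2$ suffices; your size $k+r+1$ works just as well and keeps the argument identical to (i).
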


\begin{proof}
let $r$, $\ell$, and $k$ be three positive integers, and let $(G=(V,E),k)$ be an instance  of \textsc{$(r,\ell)$-Vertex Deletion} (for claims (i) and (ii)) or of \textsc{Independent $(r,\ell)$-Vertex Deletion} (for claims (iii) and (iv)), respectively.

\emph{Claim (i)}: Let $G' = (V',E')$ such that $V' = V \cup Q$, with
$Q$ a set of $r+k+1$ vertices, disjoint from $V$, and $E' = E \cup \{\{x,y\} : x,y \in Q, x \not = y\}$.
That is, $G'$ is the disjoint union of $G$ and a clique of size $(r+k+1)$.
Let $\mA$ be an algorithm solving \textsc{$(r,\ell+1)$-Vertex Deletion} on $(G',k)$ in time $f(n,k)$, for some given function $f$.

Assume first that $S$ is a solution given by $\mA$.
Then $G' \bs S$ is an $(r,\ell+1)$-graph.
Let $\mR'$ be the set of the $r$ independent sets of $G' \bs S$ and let $\mL'$ be the set of the $\ell+1$ cliques.
We claim that at least one clique of $\mL'$ is completely contained in $Q$.
Indeed, as $Q$ is a clique, then each set in $\mR'$ constains at most one vertex of $Q$.
As $|S| \leq k$ and $|Q|=k+r+1$, at least one vertex of $Q$ is contained in one of the cliques in $\mL'$.
Let $K$ be such a clique. As there are no edges between the vertices of $Q$ and the other vertices of $G'$, it follows that $K \subseteq Q$, as we wanted to prove. Thus, $G \bs (S \cap V)$ is an $(r,\ell)$-graph, and therefore  $|S \cap V|$ is a solution of the required size.

Conversely, if we can find $S$ a solution of \textsc{$(r,\ell)$-Vertex Deletion} on $(G,k)$, then $S$ is also a solution of \textsc{$(r,\ell+1)$-Vertex Deletion} on $(G',k)$.

Summarizing,  \textsc{$(r,\ell)$-Vertex Deletion} we can solved in time $f(n+r+k+1,k)$.

\emph{Claim (ii)}: As solving \textsc{$(r,\ell)$-Vertex Deletion} on $(G,k)$ is equivalent to solving \textsc{$(\ell,r)$-Vertex Deletion} on $(\overline{G},k)$, where $\overline{G}$ is the complement of $G$,  we can apply  claim (i) on $\overline{G}$.
Thus, if  \textsc{$(r+1,\ell)$-Vertex Deletion} can be solved in time $f(n,k)$, then \textsc{$(r,\ell)$-Vertex Deletion} can be solved in time $f(n+\ell+k+1,k)$.

\emph{Claim (iii)}: We follow the proof of claim~(i), but in this case, as $S$ is an independent set, it follows that $|S \cap Q| \leq 1$, so we can use a clique $Q$ of size $r+2$ instead of $r+k+1$.
 Hence, if  \textsc{Independent $(r,\ell+1)$-Vertex Deletion} can be solved in time $f(n,k)$, then  \textsc{Independent $(r,\ell)$-Vertex Deletion} can be solved in time $f(n+r+2,k)$.

\emph{Claim (iv)}:  We follow again the proof of claim~(i), but we redefine $Q$ to be an independent set whose vertices are completely adjacent to $V$. It can be easily checked that both instances are equivalent.
Thus, if \textsc{Independent $(r+1,\ell)$-Vertex Deletion} can be solved in time $f(n,k)$, then   \textsc{Independent $(r,\ell)$-Vertex Deletion} can be solved in time $f(n+\ell+k+1,k)$.
\end{proof}

\section{Well-known properties of iterative compression}
\label{ap:iterative-compression}

As mentioned in the introduction, iterative compression has been successfully used to obtain efficient algorithms for a number of parameterized problems~\cite{RSV04,GK0MPRR15,KrWa14}. In a nutshell, the main idea of this technique is to reduce in {\sf FPT}-time  a problem to solving a so-called \emph{disjoint} version of it, where we assume that we are given a solution of size almost as small as the desired one, and that allows us to exploit the structure of the graph in order to obtain the actual solution, which is required to be disjoint from the given one. This technique usually applies to hereditary properties.

A graph property $\QQ$ is \emph{hereditary} if any subgraph of a graph that satisfies $\QQ$ also satisfies $\QQ$. Let $\QQ$ be a hereditary graph property.
We define the following two problems in order to state two general facts about the technique of iterative compression, which we use in Section~\ref{sec:rlvd}.

\paraprobl
{\textsc{$\QQ$-Vertex Deletion}}
{A graph $G=(V,E)$ and an integer $k$.}
{A set $S \subseteq V$ of size at most $k$ such that $G\bs S$ satisfies property $\QQ$, or a correct report that such a set does not exist.}
{$k$.}

\vspace{-.4cm}
\paraprobl
{\textsc{Disjoint $\QQ$-Vertex Deletion}}
{A graph $G = (V,E)$, an integer $k$, and a set $S \subseteq V$ of size at most $k+1$ such that $G\bs S$ satisfies property $\QQ $.}
{A set $S' \subseteq V\sm S$ of size at most $k$ such that $G \bs S'$ satisfies property $\QQ$, or a correct report that such a set does not exist.}
{$k$.}

The following two results are well-known (cf. for instance~\cite{Cygan15}) and commonly assumed when using  iterative compression. We include the proofs here for completeness.

\begin{lemma}
\label{lemma:disjoint}
If \textsc{Disjoint $\QQ$-Vertex Deletion} can be solved in {\sf FPT}-time, 
then \textsc{$\QQ$-Vertex Deletion} can also be solved in {\sf FPT}-time.
\end{lemma}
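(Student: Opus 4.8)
The plan is to use the standard iterative compression scheme. First I would fix an arbitrary ordering $v_1, \ldots, v_n$ of $V$ and, for $i \in \{1, \ldots, n\}$, set $G_i = G[\{v_1, \ldots, v_i\}]$, processing the graphs $G_1 \subseteq G_2 \subseteq \cdots \subseteq G_n = G$ in turn. The invariant I would maintain is that after processing $G_i$ I have in hand either a set $S_i \subseteq V(G_i)$ with $|S_i| \leq k$ such that $G_i - S_i$ satisfies $\QQ$, or a certificate that $G_i$ (and hence $G$) admits no such set. The base case $G_1$ is trivial. For the inductive step, observe that $S_i \cup \{v_{i+1}\}$ is a set of size at most $k+1$ whose removal from $G_{i+1}$ leaves $G_i - S_i$, which satisfies $\QQ$; thus $G_{i+1}$ comes equipped with a solution of size at most $k+1$, and the task reduces to \emph{compressing} it to a solution of size at most $k$.

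The heart of the argument is this compression step, which I would reduce to \textsc{Disjoint $\QQ$-Vertex Deletion} by brute force over the way the two solutions overlap. Write $Z = S_i \cup \{v_{i+1}\}$ for the available solution of size at most $k+1$. Any hypothetical solution $S^*$ of size at most $k$ splits as $S^* = Y \cup (S^* \setminus Z)$ with $Y := S^* \cap Z$. Since I do not know $S^*$, I would iterate over all $2^{|Z|} \leq 2^{k+1}$ subsets $Y \subseteq Z$, interpreting $Y$ as the part of the sought solution lying inside $Z$. For each $Y$, I remove $Y$ from the graph and look for the remainder by calling the \textsc{Disjoint $\QQ$-Vertex Deletion} algorithm on the instance $(G_{i+1} - Y,\ k - |Y|,\ Z \setminus Y)$. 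This is a legitimate instance, because $(G_{i+1} - Y) - (Z \setminus Y) = G_{i+1} - Z$ satisfies $\QQ$ and $|Z \setminus Y| \leq k + 1 - |Y| = (k - |Y|) + 1$, exactly the size bound the disjoint version requires of its input solution. If the call returns a set $S''$ disjoint from $Z \setminus Y$ with $|S''| \leq k - |Y|$, then $Y \cup S''$ is a solution of $G_{i+1}$ of size at most $k$, since $G_{i+1} - (Y \cup S'') = (G_{i+1} - Y) - S''$ satisfies $\QQ$; I output it and move on. If every choice of $Y$ fails, I report that $G_{i+1}$ has no solution of size at most $k$.

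For correctness I would argue both directions. Soundness is immediate from the identity $G_{i+1} - (Y \cup S'') = (G_{i+1} - Y) - S''$ used above. For completeness, if $S^*$ is any solution of $G_{i+1}$ of size at most $k$, then for the specific choice $Y = S^* \cap Z$ the set $S^* \setminus Y = S^* \setminus Z$ is disjoint from $Z \setminus Y$, has size at most $k - |Y|$, and satisfies $(G_{i+1} - Y) - (S^* \setminus Y) = G_{i+1} - S^*$, so it is a feasible solution of the disjoint instance; hence the corresponding call succeeds and the loop finds a solution of size at most $k$. Conversely, if the compression fails at step $i+1$, then $G_{i+1}$ has no size-$k$ solution, and since $\QQ$ is hereditary and $G_{i+1}$ is an induced subgraph of $G$, neither does $G$, so aborting is correct. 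For the running time, each compression step makes at most $2^{k+1}$ calls to an {\sf FPT}-algorithm running in time $g(k) \cdot n^{O(1)}$, and there are $n$ steps, giving a total of $2^{k+1} \cdot g(k) \cdot n^{O(1)} = h(k) \cdot n^{O(1)}$, which is {\sf FPT}-time. I expect the only delicate point to be the parameter bookkeeping in the compression step --- checking that $Z \setminus Y$ meets the size-$(k' + 1)$ precondition of the disjoint instance with $k' = k - |Y|$ --- while the vertex-by-vertex buildup, and the appeal to heredity that justifies aborting, are routine.
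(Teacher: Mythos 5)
Your proof is correct and follows essentially the same route as the paper's: iterative compression with a vertex-by-vertex buildup, brute-force guessing of the overlap between the sought solution and the current size-$(k{+}1)$ solution, a call to \textsc{Disjoint $\QQ$-Vertex Deletion} on the graph with the guessed part removed, and an appeal to heredity to justify aborting. The only cosmetic difference is that you guess the intersection $Y = S^* \cap Z$ while the paper guesses its complement inside the old solution (the vertices to be kept), which leads to the same instances and the same parameter bookkeeping.
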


\begin{proof}
Let $\mA$ be an {\sf FPT} algorithm which solves \textsc{Disjoint $\QQ$-Vertex Deletion}. 
Let $G=(V,E)$ be a graph and $k$ be an integer.
We want to solve \textsc{$\QQ$-Vertex Deletion} on $(G,k)$.
Let $v_1, \ldots, v_n$ be an arbitrary ordering of $V$.
For each $i \in \{0,\ldots, n\}$, let $V_i$ denote the subset of vertices $\{v_1, \ldots , v_i \}$ and $G_i=G[V_i]$.
We iterate over $i$ from $1$ to $n$ as follows.
At the $i$-th iteration, suppose we have a solution $S_i \subseteq V_i$ of \textsc{$\QQ$-Vertex Deletion}  on $(G_i,k)$.
At the next iteration, we can define $S_{i+1} = S_i \cup \{v_{i+1}\}$.
Note that $S_{i+1}$ is a solution of \textsc{$\QQ$-Vertex Deletion} on $(G_{i+1},k+1)$.
If $S_{i+1}$ is of size at most $k$ then it is a solution of \textsc{$\QQ$-Vertex Deletion}  on $(G_{i+1},k)$.
Assume that $S_{i+1}$ is of size exactly $k+1$.
We guess a subset $S$ of $S_{i+1}$ and we look for a solution $W$ of \textsc{$\QQ$-Vertex Deletion}  on $(G_{i+1},k)$ that does not contain any element of $S$.
For this, we use algorithm $\mA$  on $(H,|S|-1,S)$ with $H = G_{i+1} \bs (S_{i+1} \sm S)$.
If $\mA$ returns a solution $W$ then observe that the set $W \cup (S_{i+1} \sm S)$ is a solution of \textsc{$\QQ$-Vertex Deletion}  on $(G_{i+1},k)$.
If $\mA$  on $(H,|S|-1,S)$ does not return a positive answer for any of the possible guesses of $S$,
then \textsc{$\QQ$-Vertex Deletion}  on $(G_{i+1}, k)$ has no solution.
Since the property $\QQ$ is hereditary, \textsc{$\QQ$-Vertex Deletion}  on $(G, k)$ has no solution either, and therefore the algorithm returns that there is no solution. Thus, we obtain an algorithm solving \textsc{$\QQ$-Vertex Deletion} in {\sf FPT}-time, as we wanted.\end{proof}

\vspace{-.15cm}
\begin{corollary}
\label{coro:disjoint}
If \textsc{Disjoint $\QQ$-Vertex Deletion} can be solved in time $c^k\cdot n^{O(1)}$ for some constant $c$,
then \textsc{$\QQ$-Vertex Deletion} can be solved in time $(c+1)^k \cdot n^{O(1)}$.
\end{corollary}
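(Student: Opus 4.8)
The plan is to reuse verbatim the iterative compression scheme from the proof of Lemma~\ref{lemma:disjoint}, and simply bound more carefully the cost incurred at each of the $n$ iterations. Recall that at iteration $i$ we maintain a solution $S_{i+1}$ of \textsc{$\QQ$-Vertex Deletion} on $(G_{i+1}, k+1)$, and that the only non-trivial case is when $|S_{i+1}| = k+1$; there we loop over all subsets $S \subseteq S_{i+1}$ and, for each one, invoke $\mA$ on $(H, |S|-1, S)$ with $H = G_{i+1} \bs (S_{i+1}\sm S)$. The correctness of this scheme is inherited unchanged from Lemma~\ref{lemma:disjoint}, so nothing beyond the running-time accounting needs to be re-argued.

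First I would count, for a fixed iteration, the total work spent across all guesses of $S$. There are exactly $\binom{k+1}{j}$ subsets $S \subseteq S_{i+1}$ of size $j$, and for each such $S$ the call to $\mA$ runs with parameter $|S|-1 = j-1$, hence in time $c^{j-1}\cdot n^{O(1)}$ by hypothesis. Summing over $1 \leq j \leq k+1$, the cost of one iteration is therefore
$$\sum_{j=1}^{k+1} \binom{k+1}{j} c^{j-1}\cdot n^{O(1)} \leq \frac{1}{c}\sum_{j=0}^{k+1}\binom{k+1}{j}c^{j}\cdot n^{O(1)} = \frac{(c+1)^{k+1}}{c}\cdot n^{O(1)},$$
where the last equality is the binomial theorem. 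Since $\frac{c+1}{c}$ is a constant, this is $(c+1)^{k}\cdot n^{O(1)}$. Multiplying by the $n$ iterations gives an overall running time of $n \cdot (c+1)^{k} \cdot n^{O(1)} = (c+1)^{k}\cdot n^{O(1)}$, absorbing the extra factor of $n$ into the polynomial term, as claimed.

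The only real subtlety, and the step I would handle with care, is the exponent bookkeeping in the call $\mA(H, |S|-1, S)$. The total solution returned is $W \cup (S_{i+1}\sm S)$, whose size is $|W| + (k+1-|S|)$, so demanding that it be at most $k$ forces $|W| \leq |S|-1$; this is exactly why the parameter passed to $\mA$ is $|S|-1$ rather than $|S|$. It is precisely this "$-1$" that converts the naive bound $(c+1)^{k+1}$ into the claimed $(c+1)^{k}$, and getting it wrong would introduce a spurious factor of $c+1$. Beyond this elementary summation no genuine obstacle remains, so the corollary follows directly from the analysis of Lemma~\ref{lemma:disjoint}.
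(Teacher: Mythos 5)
Your proof is correct and follows essentially the same route as the paper: run the iterative compression scheme of Lemma~\ref{lemma:disjoint} and bound the per-iteration cost of the $2^{k+1}$ guesses via the binomial theorem, $\sum_j \binom{k+1}{j} c^{j}\cdot n^{O(1)} = (c+1)^{k+1}\cdot n^{O(1)} = (c+1)^{k}\cdot n^{O(1)}$ since $c$ is a constant. Your extra care with the exponent $|S|-1$ is harmless but not actually needed, because the ``spurious'' factor $c+1$ you worry about is itself a constant and is absorbed anyway.
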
\vspace{-.15cm}
\begin{proof}
Let us argue about the running time of the algorithm of Lemma~\ref{lemma:disjoint}, assuming \textsc{Disjoint $\QQ$-Vertex Deletion} can be solved in time $c^k\cdot n^{O(1)}$ for some constant $c$.
The time required to execute $\mA$ for every subset $S$ at the $i$-th iteration is $\sum_{i=0}^{k+1} {k+1 \choose i} \cdot c^i\cdot n^{O(1)} = (c+1)^{k+1}\cdot n^{O(1)}$.
We obtain an algorithm that computes $\PP(\QQ)$ in time $(c+1)^k \cdot n^{O(1)}$, as we wanted.
\end{proof}

\section{$(r,\ell)$-\textsc{Vertex Deletion}}
\label{sec:rlvd}
%
%

By Lemma~\ref{lemma:rl}, in this section we may focus on the algorithm for \textsc{$(2,2)$-Vertex Deletion}. As we  use the technique of iterative compression, we need to define and solve the \emph{disjoint} version of the
$(2,2)$-\textsc{Vertex Deletion} problem. Indeed, if we solve the disjoint version, we just need to apply Corollary~\ref{coro:disjoint} in Section~\ref{ap:iterative-compression} to obtain a single-exponential {\sf FPT}-algorithm for $(2,2)$-\textsc{Vertex Deletion}.

\vspace{-.25cm}
\paraprobl
{\textsc{Disjoint $(r, \ell)$-Vertex Deletion}}
{A graph $G = (V,E)$, an integer $k$, and a set $S \subseteq V$ of size at most $k+1$ such that $G\bs S$ is an $(r, \ell)$-graph.}
{A set $S' \subseteq V\sm S$ of size at most $k$ such that $G \bs S'$ is an $(r, \ell)$-graph, or a correct report that such a set does not exist.}
{$k$.}


%
%

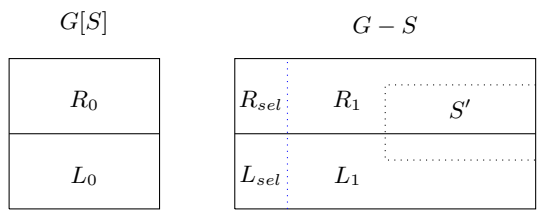
\begin{figure}
  \centering
  \begin{tikzpicture}
    \capt{3.5,0.2}{$L_1$};
    \capt{3.5,1.2}{$R_1$};
    \capt{0,0.2}{$L_0$};
    \capt{0,1.2}{$R_0$};
    \capt{5,1.1}{$S'$};

    \capt{0, 2.2}{$G[S]$};
    \capt{4,2.2}{$G \bs S$};

\add{
    \capt{2.35,1.2}{$\rs$};
    \capt{2.35,0.2}{$\ls$};
}

    \draw (0,0)
    -- +(0,2)
    -- +(2,2)
    -- +(2,0)
    -- +(0,0);

    \draw (3,-0)
    -- +(0,2)
    -- +(4,2)
    -- +(4,0)
    -- +(0,0);
\add{
  \draw[dotted] (3.7,0) -- +(0,2);
}

    \draw[dotted] (7,1.65) -- + (-2,0) -- +(-2,-1) -- +(0,-1);

    \draw (3,1) -- +(4,0);
    \draw (0,1) -- +(2,0);
  \end{tikzpicture}
  \caption{An $(r, \ell)$-partition of $G[S]$ and $G \bs S$ to solve \textsc{Disjoint $(r, \ell)$-Vertex Deletion} in the proof of Theorem~\ref{lemma:dttvd}.\vspace{-.0cm}}
  \label{fig:CS}
\end{figure}

}

\begin{theorem}
\label{lemma:dttvd}
{\textsc{Disjoint $(2,2)$-Vertex Deletion}} can be solved in time $2.31^{k} \cdot n^{O(1)}$, and therefore {\textsc{$(2,2)$-Vertex Deletion}} can be solved in time $3.31^{k} \cdot n^{O(1)}$.
\end{theorem}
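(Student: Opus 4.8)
The plan is to reduce \textsc{Disjoint $(2,2)$-Vertex Deletion} to polynomially many instances of \textsc{Restricted OCT}, each solvable in time $2.31^{k}\cdot n^{O(1)}$ by Lemma~\ref{lemma:roct}. The second statement is then immediate: once the disjoint version runs in $2.31^{k}\cdot n^{O(1)}$, Corollary~\ref{coro:disjoint} with $c = 2.31$ gives an algorithm for \textsc{$(2,2)$-Vertex Deletion} running in $(2.31+1)^{k}\cdot n^{O(1)} = 3.31^{k}\cdot n^{O(1)}$. So all the work is in the disjoint version.

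First I would fix the target partition. Since $S' \subseteq V \sm S$, any solution leaves $S$ intact, so $G \bs S'$ is a $(2,2)$-graph containing $S$; in particular $G[S]$ must itself be a $(2,2)$-graph, and otherwise I report that no solution exists. As $\max\{2,2\} = 2$, I can compute in polynomial time a $(2,2)$-partition $(R_1, L_1)$ of $G \bs S$ and one $(2,2)$-partition of $G[S]$. I then guess the trace $(R_0, L_0) := (R \cap S, L \cap S)$ of the unknown target partition $(R,L)$ of $G \bs S'$ on $S$, which is a $(2,2)$-partition of $G[S]$. By Lemma~\ref{lemma:rlpartition} applied to $G[S]$ (with $r = \ell = 2$), every $(2,2)$-partition of $G[S]$ differs from the computed one by moving at most $4$ vertices on each side, so there are only $n^{O(1)}$ candidates for $(R_0, L_0)$, all enumerable in polynomial time. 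The key step is a second application of Lemma~\ref{lemma:rlpartition}, now to $H := G[(V \sm S) \sm S']$, which carries two $(2,2)$-partitions: the restriction $(R_1 \sm S', L_1 \sm S')$ of the known partition of $G \bs S$, and the restriction of the target partition $(R,L)$. By the lemma these agree up to at most $4$ vertices on each side, so there are sets $\ls \subseteq R_1$ and $\rs \subseteq L_1$, each of size at most $4$, such that $\ls$ moves to the co-bipartite side $L$, $\rs$ moves to the bipartite side $R$, every vertex of $R_1 \sm \ls$ either stays in $R$ or is deleted, and every vertex of $L_1 \sm \rs$ either stays in $L$ or is deleted. I therefore guess $\ls$ and $\rs$ among the $n^{O(1)}$ possibilities (see Figure~\ref{fig:CS}).

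Having fixed $(R_0, L_0, \ls, \rs)$, the remaining choices decouple into two deletion problems sharing only the budget $k$. On the bipartite side, $R = R_0 \cup \rs \cup ((R_1 \sm \ls) \sm S')$ must induce a bipartite graph and only vertices of $R_1 \sm \ls$ may be deleted; this is exactly \textsc{Restricted OCT} on $G[R_0 \cup \rs \cup (R_1 \sm \ls)]$ with allowed deletion set $D = R_1 \sm \ls$. Symmetrically, $L = L_0 \cup \ls \cup ((L_1 \sm \rs) \sm S')$ must be co-bipartite, i.e.\ $\overline{G}[L]$ bipartite, which is \textsc{Restricted OCT} on $\overline{G}[L_0 \cup \ls \cup (L_1 \sm \rs)]$ with deletion set $L_1 \sm \rs$. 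For each guess I solve both instances using Lemma~\ref{lemma:roct}, obtain minimum solution sizes $s_R$ and $s_L$, and accept if $s_R + s_L \leq k$. Soundness holds because the two deletion sets are disjoint and their union yields a valid $(2,2)$-partition of $G \bs S'$; completeness holds because, for the guess induced by an optimal solution, its restrictions to the two sides are feasible and their sizes sum to $|S'|$. Each guess costs $2 \cdot 2.31^{k}\cdot n^{O(1)}$, and with $n^{O(1)}$ guesses the total running time is $2.31^{k}\cdot n^{O(1)}$.

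The hard part is the second use of Lemma~\ref{lemma:rlpartition}: recognizing that, after removing $S'$, the target partition and the given partition $(R_1, L_1)$ of $G \bs S$ can disagree on only constantly many vertices, which is precisely what makes $\ls$ and $\rs$ guessable and decouples the instance into one \textsc{Restricted OCT} in $G$ and one in $\overline{G}$. Setting up these two instances with the correct undeletable sets $R_0 \cup \rs$ and $L_0 \cup \ls$ and deletable sets $R_1 \sm \ls$ and $L_1 \sm \rs$ is the crux; the rest is bookkeeping.
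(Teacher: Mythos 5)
Your proposal is correct and takes essentially the same route as the paper's proof: iterative compression via Corollary~\ref{coro:disjoint}, guessing the trace $(R_0,L_0)$ of the target partition on $S$, fixing one $(2,2)$-partition of $G \setminus S$ and using Lemma~\ref{lemma:rlpartition} to guess the at most $4+4$ vertices that switch sides, and then decoupling into two \textsc{Restricted OCT} instances (one in $G$, one in $\overline{G}$) solved by Lemma~\ref{lemma:roct}. The only cosmetic differences are that the paper finds a minimum-size $L'$ greedily and spends the leftover budget on $R'$ (rather than comparing $s_R+s_L$ with $k$), and that it keeps the swapped vertices inside the deletable sets; neither affects correctness or the running time.
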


\begin{proof}
Let $G$ be a graph, let $k$ be an integer, and let $S \subseteq V$ be a set of size at most $k+1$ such that $G \bs S$ is a $(2,2)$-graph.
We want to find a set $S' \subseteq V \sm S$ such that $G \bs S'$ is a $(2,2)$-graph with $|S'| \leq k$.
As the property of being a $(2,2)$-graph is hereditary, we can assume that $G[S]$ is a $(2,2)$-graph.
If it is not the case, we clearly have a \textsc{No}-instance and we stop.
We  guess a $(2,2)$-partition ($R_0$,$L_0$) of the graph $G[S]$,
and we  fix a $(2,2)$-partition \supprOK{($R_1$,$L_1$)}\addOK{$(R_S, L_S)$} of the graph $G\bs S$.
We guess $\ls \subseteq R_S$  and $\rs \subseteq L_S$, both of size at most $4$.
We define $R_1 = R_S \cup \rs \sm \ls$ and $L_1 = L_S \cup \ls \sm \rs$.
By Lemma~\ref{lemma:rlpartition},  there are at most $O(k^8\cdot n^{8})$ choices for $R_0$, $L_0$, \supprOK{$R_1$, and $L_1$}\addOK{$\rs$, and $\ls$}.
For each choice, we look for a solution $S' = R' \cup L'$ of size at most $k$ such that $R' \subseteq R_1$ and  $L' \subseteq L_1$.
A representation of this selection is depicted in Fig.~\ref{fig:CS}.
We define $L'$ as a smallest subset of $L_1$ such that $G[L_0 \cup (L_1 \setminus L')]$ is a $(0,2)$-graph.
In order to find it, we apply $k+1$ times the algorithm for \textsc{Restricted OCT} from Theorem~\ref{theorem:roct} to ($\overline{G}[L_0 \cup L_1]$,$L_1$, $i$) for $i$ from $0$ to $k$.
If the algorithm does not return a solution with  input ($\overline{G}[L_0 \cup L_1]$,$L_1 $, $k$) then the choice of $R_0$, \supprOK{$R_1$}\addOK{$\rs$}, $L_0$, and \supprOK{$L_1$}\addOK{$\ls$} is wrong, and we move to the next choice.
Otherwise, let $i_0$ be the smallest value of $i$ for which the algorithm returns a solution, and let $L'$ be this solution.
We now look for a set $R'$ of size at most $k-i_0$.
We find it by applying \addOK{the algorithm for} \textsc{Restricted OCT} to  ($G[R_0 \cup R_1]$, $R_1$, $k-i_0$).
If for some guess of $R_0$, \supprOK{$R_1$} \addOK{$\rs$}, $L_0$, and \supprOK{$L_1$} \addOK{$\ls$}, the algorithm returns a solution $R'$, then we output  $S'=R' \cup L'$ as our solution.
Otherwise we return that there is no solution.

Let us now analyze the running time of the algorithm. For each of the $O(k^8\cdot n^{8})$ guesses, we find $L'$ by applying \addOK{the algorithm for}  \textsc{Restricted OCT} $k+1$ times, and then we find $R'$ by applying \addOK{the algorithm for} \textsc{Restricted OCT}.
By Lemma~\ref{lemma:roct}, the claimed running time follows.

Now let us argue about the correctness of the algorithm.
If it outputs a set $S'$, then, by construction of the algorithm, this set is a solution of {\textsc{Disjoint $(2,2)$-Vertex Deletion}}.
Indeed, $L_0 \cup L_1 \sm S$ is a $(0,2)$-graph and $R_0 \cup R_1 \sm S$ is a $(2,0)$-graph.
On the other hand, assume that the instance of {\textsc{Disjoint $(2,2)$-Vertex Deletion}} has a solution $S^*$.
Let $(R^*,L^*)$ be a $(2,2)$-partition of $G \bs S^*$.
Then the solution $S^*$ can be found by the algorithm for the guess $R_0 = S \cap R^*$ and $L_0 = S \cap L^*$.
For this choice of $R_0$ and $L_0$, let $(R_S, L_S)$ be the fixed $(2,2)$-partition of $G\bs S$.
Let $G^* = (V^*,E^*)$ be the graph $G \bs (S \cup S^*)$.
Then $(R_s \cap V^*, L_S \cap V^*)$ and $(R^* \cap V^*, L^* \cap V^*)$ are two $(2,2)$-partitions of $G^*$.
By Lemma~\ref{lemma:rlpartition}, we can find $\ls \subseteq R_S$ and $\rs \subseteq L_S$ both of size at most $4$
such that $L_1 \cap V^* = L^* \cap V^*$ and $R_1 \cap V^* = R^* \cap V^*$, with  $R_1 = R_S \cup \rs \sm \ls$ and $L_1 = L_S \cup \ls \sm \rs$.
For this choice of $R_0$, $L_0$, $\rs$, and $\ls$,  the solution $S^*$ gives a value $i_0^*$ such that
the algorithm for \textsc{Restricted OCT} applied to ($\overline{G}[L_0 \cup L_1]$,$L_1 $, $i_0^*$) and
{the algorithm for} \textsc{Restricted OCT} applied to  ($G[R_0 \cup R_1]$, $R_1 $, $k-i_0^*$) will both return a solution.
Thus, if $S^*$ is a solution of {\textsc{Disjoint $(2,2)$-Vertex Deletion}}, then there is at least one choice of $R_0$, $L_0$, $\rs$, and $\ls$ such that the algorithm returns a solution.
\end{proof}

By combining Lemma~\ref{lemma:rl} and  Theorem~\ref{lemma:dttvd}, we obtain the following corollary\footnote{It is worth mentioning that if one is interested in optimizing the degree of the polynomial function $n^{O(1)}$ of our algorithms, we could solve directly the cases $(1,2)$ and $(2,1)$. In fact, this was the case in the original version of the paper, and Lemma~\ref{lemma:rl} was added after a remark of one of the referees.}
\begin{corollary}
\label{coro:vd}
{\textsc{$(2,1)$-Vertex Deletion}} and {\textsc{$(1,2)$-Vertex Deletion}} can be solved in time $3.31^k\cdot n^{O(1)}$.
\end{corollary}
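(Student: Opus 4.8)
The plan is to obtain both running times as immediate consequences of the parameter-preserving reductions established in Lemma~\ref{lemma:rl} together with the algorithm for \textsc{$(2,2)$-Vertex Deletion} from Theorem~\ref{lemma:dttvd}. Since Theorem~\ref{lemma:dttvd} already solves \textsc{$(2,2)$-Vertex Deletion} in time $3.31^k \cdot n^{O(1)}$, the only remaining task is to funnel the two target problems into it without disturbing either the exponential factor $3.31^k$ or the polynomial factor $n^{O(1)}$.

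First I would handle \textsc{$(2,1)$-Vertex Deletion}. Applying claim~(i) of Lemma~\ref{lemma:rl} with $(r,\ell) = (2,1)$ reduces it to \textsc{$(2,2)$-Vertex Deletion} while leaving the parameter $k$ unchanged; concretely, the lemma states that if the latter runs in time $f(n,k)$, then the former runs in time $f(n+r+k+1,k) = f(n+k+3,k)$. For \textsc{$(1,2)$-Vertex Deletion} I would instead invoke claim~(ii) of Lemma~\ref{lemma:rl} with $(r,\ell) = (1,2)$, which likewise reduces it to \textsc{$(2,2)$-Vertex Deletion} with the same parameter, giving the running-time transfer $f(n+\ell+k+1,k) = f(n+k+3,k)$.

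Plugging in $f(n,k) = 3.31^k \cdot n^{O(1)}$ from Theorem~\ref{lemma:dttvd}, each reduction yields a running time of $3.31^k \cdot (n+k+3)^{O(1)}$. The only point that needs a moment's care—and essentially the only obstacle—is to confirm that this additive blow-up stays absorbed in $n^{O(1)}$. This is harmless because we may assume $k \leq n$: if $k \geq n$, then deleting all $n$ vertices trivially produces the empty graph, which is both a $(2,1)$-graph and a $(1,2)$-graph, so the instance is a trivial \textsc{Yes}-instance. Under the assumption $k \leq n$ we have $n+k+3 = O(n)$, hence $(n+k+3)^{O(1)} = n^{O(1)}$, and both problems are solved in time $3.31^k \cdot n^{O(1)}$, as claimed.
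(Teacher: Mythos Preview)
Your proof is correct and follows exactly the same approach as the paper, which simply states that the corollary follows by combining Lemma~\ref{lemma:rl} and Theorem~\ref{lemma:dttvd}. You have merely spelled out the details (including the harmless $k \leq n$ observation) that the paper leaves implicit.
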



It is known that \textsc{$(2,0)$-Vertex Deletion}, also known as \textsc{OCT}, cannot be solved in time $2^{o(k)}\cdot n^{O(1)}$ unless the \textsc{ETH} fails~\cite{IPZ01,LSW12}.
By combining this result with Lemma~\ref{lemma:rl}, we obtain that the running times of Theorem~\ref{lemma:dttvd} and Corollary~\ref{coro:vd} are asymptotically best possible in terms of $k$ under \textsc{ETH}.

\begin{theorem}
\label{theorem:nsevd}
Unless the \textsc{ETH} fails, there is no algorithm running in time $2^{o(k)}\cdot n^{O(1)}$ for solving \textsc{$(2,1)$-Vertex Deletion}, {\textsc{$(1,2)$-Vertex Deletion}}, or {\textsc{$(2,2)$-Vertex Deletion}}.
\end{theorem}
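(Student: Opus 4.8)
The plan is to transfer the known \textsc{ETH}-based lower bound for \textsc{OCT} $=$ \textsc{$(2,0)$-Vertex Deletion} to each of the three target problems, using exclusively the parameter-preserving polynomial-time reductions provided by Lemma~\ref{lemma:rl} together with the complement equivalence between \textsc{$(r,\ell)$-Vertex Deletion} and \textsc{$(\ell,r)$-Vertex Deletion}. The single principle that makes everything work is the following: if $\Pi_1$ is polynomial-time reducible to $\Pi_2$ \emph{with the parameter unchanged}, and $\Pi_1$ admits no algorithm running in time $2^{o(k)}\cdot n^{O(1)}$ unless \textsc{ETH} fails, then $\Pi_2$ admits no such algorithm either; indeed, composing the reduction with a hypothetical $2^{o(k)}\cdot n^{O(1)}$ algorithm for $\Pi_2$ would solve $\Pi_1$ within the same bound, contradicting \textsc{ETH}.

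First I would handle \textsc{$(2,1)$-Vertex Deletion}. Claim~(i) of Lemma~\ref{lemma:rl}, instantiated with $r=2$ and $\ell=0$, yields a parameter-preserving polynomial-time reduction from \textsc{$(2,0)$-Vertex Deletion} to \textsc{$(2,1)$-Vertex Deletion}, so the lower bound carries over by the principle above. Applying claim~(i) once more, now with $r=2$ and $\ell=1$, reduces \textsc{$(2,1)$-Vertex Deletion} to \textsc{$(2,2)$-Vertex Deletion}, which therefore inherits the bound as well. Finally, for \textsc{$(1,2)$-Vertex Deletion} I would appeal to the complement equivalence: since forming $\overline{G}$ costs polynomial time and leaves both $n$ and $k$ untouched, an instance of \textsc{$(1,2)$-Vertex Deletion} on $G$ is the same as one of \textsc{$(2,1)$-Vertex Deletion} on $\overline{G}$, so the lower bound just established for $(2,1)$ transfers verbatim.

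The proof is essentially a bookkeeping argument, so I do not anticipate a genuine obstacle; the two points deserving attention are both minor. The first is that the reductions must preserve the parameter \emph{exactly} and blow up the instance size only polynomially, for otherwise the $2^{o(k)}$ bound would not transfer: Lemma~\ref{lemma:rl} guarantees $k'=k$, while its construction produces $n'=n+O(k)$ vertices, which is $O(n)$ under the harmless assumption $k\le n$ (an \textsc{OCT} instance with $k\ge n$ being trivially a \textsc{Yes}-instance), so that $2^{o(k)}\cdot (n')^{O(1)}=2^{o(k)}\cdot n^{O(1)}$. The second, and the only truly delicate point, is that the first reduction above uses claim~(i) with $\ell=0$, which falls outside the ``positive integers'' hypothesis under which Lemma~\ref{lemma:rl} is stated; I would note that the construction of claim~(i) (disjoint union with a clique of size $r+k+1$) and its correctness proof apply verbatim when $\ell=0$, so the reduction is valid and the combination with the \textsc{OCT} lower bound of~\cite{IPZ01,LSW12} completes the argument.
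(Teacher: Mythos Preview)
Your proposal is correct and follows essentially the same approach as the paper: the paper simply states that combining the known \textsc{ETH} lower bound for \textsc{OCT} with Lemma~\ref{lemma:rl} yields the result, and your argument is a careful unpacking of precisely that combination, including the complement trick for the $(1,2)$ case. Your observations about parameter preservation and the $\ell=0$ boundary case are accurate and add rigor that the paper leaves implicit.
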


\section{Independent $(r, \ell)$-Vertex Deletion}
\label{sec:irlvd}

In this section we consider \textsc{Independent $(r, \ell)$-Vertex Deletion}. \addOK{Recall that the problem consists in finding a solution of \textsc{$(r, \ell)$-Vertex Deletion} that induces an independent set.}  We first provide in Subsection~\ref{sec:easy-hard}  a \addOK{(classical)} complexity dichotomy for the problem. In Subsection~\ref{sec:Independent-algos} we present {\sf FPT}-algorithms for the cases $(2,1)$ and $(2,2)$.  For the sake of the presentation, we postpone the running time analysis of these algorithms to Subsection~\ref{sec:time}. As we will see, these running times strongly depend on the running time required by the algorithm of Marx \emph{et al.}~\cite{MOR10} to solve the case $(2,0)$, that is \textsc{Independent OCT}, whose  bottleneck is to solve \textsc{Independent Mincut}.



\subsection{Easy and hard cases}
\label{sec:easy-hard}



We first deal with the polynomially-solvable cases in Theorem~\ref{thm:easy} and then we present an {\sf NP}-hardness reduction for the other cases in Theorem~\ref{thm:NP-hard}.

\begin{theorem}\label{thm:easy}
Let $r \in \{0,1\}$ and $\ell \in  \{0,1,2\}$ be two fixed integers.
The {\textsc{Independent ($r$,$\ell$)-Vertex Deletion}} problem can be solved in polynomial time.
\end{theorem}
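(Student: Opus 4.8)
The plan is to exploit the reformulation noted in the introduction: solving \textsc{Independent $(r,\ell)$-Vertex Deletion} on $(G,k)$ amounts to deciding whether $G$ is an $(r+1,\ell)$-graph in which one of the $r+1$ independent sets has size at most $k$. Since here $r+1 \le 2$ and $\ell \le 2$, we have $\max\{r+1,\ell\} \le 2$, so I may freely use that $(r+1,\ell)$-graphs are recognizable in polynomial time~\cite{Bra96} and that Lemma~\ref{lemma:rlpartition} applies to their partitions. Accordingly, I would reduce the problem to computing the quantity $\mu(G)$ defined as the minimum, over all $(r+1,\ell)$-partitions of $G$ and over all of their $r+1$ independent parts, of the size of that part; the instance is a \textsc{Yes}-instance if and only if $\mu(G) \le k$. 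Indeed, a feasible solution $S$ (independent, with $G-S$ an $(r,\ell)$-graph) yields an $(r+1,\ell)$-partition of $G$ having $S$ as one independent part, and conversely the smallest independent part of any $(r+1,\ell)$-partition is itself a feasible $S$.

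Next I would decouple $\mu(G)$ into a choice of the coarse bipartition and a choice inside the independent side. Write an $(r+1,\ell)$-partition as a bipartition $(R,L)$ with $G[R]$ an $(r+1,0)$-graph and $G[L]$ a $(0,\ell)$-graph, together with a refinement of $R$ into $r+1$ independent sets. For a fixed $(R,L)$, minimizing the smallest independent part is a purely local task on $G[R]$: when $r=0$ the only part is $R$ itself, of size $|R|$; when $r=1$, $G[R]$ is bipartite and, arguing component by component, the minimum over all $2$-colourings of the smaller colour class equals the independent vertex cover number of $G[R]$, which Lemma~\ref{lemma:ivc} computes in linear time. Crucially, any independent set $S$ realizing this value is independent in $G$ (as $G[R]$ is induced) and leaves $G-S = (R\sm S)\cup L$ an $(r,\ell)$-graph, since $R\sm S$ is independent and $L$ splits into $\ell$ cliques.

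It then remains to range over all admissible bipartitions $(R,L)$, and this is exactly where Lemma~\ref{lemma:rlpartition} enters: having computed one $(r+1,\ell)$-partition of $G$ in polynomial time~\cite{Bra96} (if none exists, $G$ is not an $(r+1,\ell)$-graph and we report \textsc{No}), every other bipartition $(R',L')$ satisfies $R' = (R\sm \ls)\cup \rs$ with $\ls = L'\cap R$ and $\rs = R'\cap L$ both of size at most $(r+1)\ell \le 4$. Hence there are only $O(n^{2(r+1)\ell}) = O(n^{8})$ candidate bipartitions; I would enumerate all pairs $(\ls,\rs)$, discard those for which $G[R']$ is not an $(r+1,0)$-graph or $G[L']$ is not a $(0,\ell)$-graph (both tests being polynomial for $\max\{r+1,\ell\}\le 2$), and for each surviving bipartition evaluate the local quantity above. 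Returning the overall minimum and comparing it with $k$ settles the instance. This uniform argument covers all six pairs $(r,\ell)$ with $r\le 1$ and $\ell\le 2$ at once; alternatively, by Lemma~\ref{lemma:rl}(iii)--(iv) it would suffice to treat the single case $(1,2)$, to which all others reduce.

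As for the main difficulty: the recognition step and the enumeration bound are routine given \cite{Bra96} and Lemma~\ref{lemma:rlpartition}, so the only genuinely substantive step is the reformulation via $\mu(G)$ together with the observation that, for a fixed independent side, minimizing the smallest colour class is precisely \textsc{Independent Vertex Cover}. The subtle point to get right is that one must minimize over \emph{all} bipartitions simultaneously, rather than relying on a single arbitrarily chosen $(r+1,\ell)$-partition, since the optimal $S$ may live in a different bipartition than the one returned by the recognition algorithm; Lemma~\ref{lemma:rlpartition} is what guarantees that these finitely many relevant bipartitions can be generated in polynomial time.
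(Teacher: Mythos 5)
Your proposal is correct and takes essentially the same route as the paper: the paper likewise handles the case $(1,2)$ by testing whether $G$ is a $(2,2)$-graph~\cite{Bra96}, enumerating its $O(n^{8})$ $(2,2)$-partitions via Lemma~\ref{lemma:rlpartition}, solving \textsc{Independent Vertex Cover} on the independent side $G[R]$ via Lemma~\ref{lemma:ivc}, and covering all remaining pairs $(r,\ell)$ through the reductions of Lemma~\ref{lemma:rl}. Your reformulation via $\mu(G)$ and the uniform treatment of all six cases is just a repackaging of that same argument.
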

\begin{proof}
Let us first consider the case $r = 1$ and $\ell = 2$.
One can check in polynomial time whether $G$ is a $(2,2)$-graph~\cite{Bra96}.
If it is not, then {\textsc{Independent $(1,2)$-Vertex Deletion}} on $(G,k)$ has no solution.
So assume that $G$ is a $(2,2)$-graph. By Lemma~\ref{lemma:rlpartition},  there are $O(n^{8})$ $(2,2)$-partitions of $G$ that can be computed in polynomial time.
We guess a $(2,2)$-partition $(R,L)$ of $G$, and we aim at partitioning $R$ into two independent sets $R_1$ and $R_2$ such that $|R_2| \leq k$.
If \textsc{Independent Vertex Cover} on $(G[R],k)$ has a solution $S$, then $R_1 = R \setminus S$ and $R_2 = S$ is the partition we want, and we return $S$.
Note that by Lemma~\ref{lemma:ivc}, \textsc{Independent Vertex Cover} can be solved in linear time on the graph $G[R]$.
If \textsc{Independent Vertex Cover} does not return a solution for any of the guesses of $(R,L)$, we return that our problem has no solution.

Finally, applying Lemma~\ref{lemma:rl} we obtain that {\textsc{Independent ($r$,$\ell$)-Vertex Deletion}} problem can be solved in polynomial time for every $r \in \{0,1\}$ and $\ell \in \{0,1,2\}$.\end{proof}



\begin{theorem}\label{thm:NP-hard}
Let $\ell \in  \{0,1,2\}$ be a fixed integer. The
{\textsc{Independent (2,$\ell$)-Vertex Deletion}} problem is {\sf NP}-hard.
\end{theorem}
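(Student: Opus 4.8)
The plan is to reduce directly from the recognition problem for $(3,\ell)$-graphs, which is {\sf NP}-complete for every $\ell\in\{0,1,2\}$ since $\max\{3,\ell\}\geq 3$~\cite{Bra96,FHKM03}. The key is the observation already noted in the introduction: for a graph $G$ on $n$ vertices, deciding \textsc{Independent $(2,\ell)$-Vertex Deletion} on $(G,k)$ amounts to deciding whether $G$ is a $(3,\ell)$-graph admitting a $(3,\ell)$-partition in which one of the three independent sets has size at most $k$. So I would map an instance $G$ of $(3,\ell)$-recognition to the instance $(G,n)$ of \textsc{Independent $(2,\ell)$-Vertex Deletion}; this transformation is clearly computable in polynomial time, since setting the budget to $k=n$ makes the size constraint on the deleted set vacuous.

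It then remains to verify the equivalence of the two instances, which is the only real content of the proof. In one direction, suppose $(G,n)$ is a \textsc{Yes}-instance, witnessed by an independent set $S\subseteq V(G)$ (necessarily $|S|\leq n$) such that $G-S$ is a $(2,\ell)$-graph. Fix a $(2,\ell)$-partition of $G-S$ into two independent sets $I_1,I_2$ and $\ell$ cliques $Q_1,\dots,Q_\ell$. Then $(S,I_1,I_2,Q_1,\dots,Q_\ell)$ is a $(3,\ell)$-partition of $G$, so $G$ is a $(3,\ell)$-graph. In the other direction, if $G$ is a $(3,\ell)$-graph, fix a partition of $V(G)$ into three independent sets $I_1,I_2,I_3$ and $\ell$ cliques, and set $S=I_3$. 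Then $S$ is an independent set of size at most $n$, and the remaining parts form a $(2,\ell)$-partition of $G-S$, so $(G,n)$ is a \textsc{Yes}-instance. Hence the reduction is correct and \textsc{Independent $(2,\ell)$-Vertex Deletion} is {\sf NP}-hard for each $\ell\in\{0,1,2\}$.

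I do not expect any genuine obstacle here: the hardness is inherited wholesale from $(3,\ell)$-recognition, and the independence requirement on $S$ is exactly what promotes the two independent sets of $G-S$ to a third color class of $G$. If one prefers to isolate a single base case, an alternative is to prove {\sf NP}-hardness only for $\ell=0$ (where the equivalence above reads ``$G$ is $3$-colorable'', i.e.\ a reduction from \textsc{$3$-Coloring}) and then obtain the cases $\ell=1$ and $\ell=2$ by applying Lemma~\ref{lemma:rl}(iii) twice, since that lemma reduces \textsc{Independent $(2,\ell)$-Vertex Deletion} to \textsc{Independent $(2,\ell+1)$-Vertex Deletion} while preserving the parameter. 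Either route yields the claimed {\sf NP}-hardness for all three values of $\ell$.
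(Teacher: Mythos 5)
Your proof is correct, but it takes a genuinely different route from the paper's. The paper reduces from \textsc{Odd Cycle Transversal}, i.e.\ \textsc{$(2,0)$-Vertex Deletion} ({\sf NP}-complete by \cite{LeYa80}): each edge $e=\{v,w\}$ is replaced by $k+1$ internally disjoint paths $v,v_e^i,w_e^i,w$ of length three, so that the original vertices become pairwise non-adjacent and any OCT solution inside $V$ is automatically independent; a local exchange argument then shows that subdivision vertices can always be removed from any solution of the new instance. This settles $\ell=0$ with the parameter $k$ unchanged, and the cases $\ell=1,2$ follow from Lemma~\ref{lemma:rl}. You instead reduce from the recognition of $(3,\ell)$-graphs, {\sf NP}-complete by \cite{Bra96,FHKM03}, by setting the budget to $k=n$, exploiting the observation (made in the paper's introduction) that \textsc{Independent $(2,\ell)$-Vertex Deletion} on $(G,k)$ asks exactly whether $G$ is a $(3,\ell)$-graph in which one independent set has size at most $k$; your verification of the equivalence is complete and correct, and this route handles all three values of $\ell$ uniformly without any gadget and without invoking Lemma~\ref{lemma:rl} at all (your fallback route via Lemma~\ref{lemma:rl}(iii) also mirrors the paper's structure, with \textsc{$3$-Coloring} as base case; note the lemma is stated for positive $r,\ell$, but its proof, and the paper's own use of it, cover $\ell=0$). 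The trade-off: the paper's reduction preserves the parameter, so it is compatible with parameterized lower bounds --- for instance, combined with the ETH lower bound for OCT \cite{IPZ01,LSW12} it would rule out $2^{o(k)}\cdot n^{O(1)}$ algorithms for the independent variants --- whereas your reduction sets $k=n$ and therefore yields only classical {\sf NP}-hardness, which is, however, all that the theorem claims.
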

\begin{proof} We first prove that \textsc{Independent $(2,0)$-Vertex Deletion} is {\sf NP}-hard.
We reduce from \textsc{$(2,0)$-Vertex Deletion}, commonly called \textsc{Odd Cycle Transversal}. 
The problem is proved to be {\sf NP}-complete in \cite{LeYa80}.

Let $G=(V,E)$ be a graph, let $k$ be an integer, and let $n = |V|$.
We want to solve \textsc{$(2,0)$-Vertex Deletion} on $(G,k)$.
We define $G' = (V',E')$, such that $V' = V \cup \{v_{e}^i, w_{e}^i :  e=\{v,w\} \in E, i \in \{0,\ldots, k\}\}$
and $E' = \{\{v,v_{e}^i\}, \{w,w_{e}^i\} : e=\{v,w\} \in E, i \in \{0 ,\ldots, k\}\} \cup \{\{v_{e}^{i},w_{e}^i\} : v \in V, w \in V, e=\{v,w\} \in E, i \in \{0 ,\ldots, k\}\}$.
That is, we replace each edge $e=\{v,w\}$ of $E$ by $n+1$ paths $v,v_e^i, w_e^i,w$ of length $3$, for $i \in \{0,\ldots, k\}$.
Assume we have a solution $S \subseteq V$ of  \textsc{$(2,0)$-Vertex Deletion} on $(G,k)$.
In $G'$, there is no edge between two vertices of $V$.
So $S$ is also a solution of \textsc{Independent $(2,0)$-Vertex Deletion} on $(G',k)$.
Now, we assume that $S$ is a solution of \textsc{Independent $(2,0)$-Vertex Deletion} on $(G',k)$.
We have that $S \cap V$ is also a solution of \textsc{Independent $(2,0)$-Vertex Deletion} on $G'$.
Indeed, assume $v_e^i \in V'$ is in $S$ for some $e = \{v,w\} \in E$ and $i \in \{0,\ldots, k\}$; the same analysis will apply to $w_e^i$.
If $v \in S$ then $v_e^i$ has only one neighbor in $G' \bs \{v\}$, so if $G' \bs S$ is bipartite, then so is $G'\bs (S \sm \{v_e^i\})$ and thus $S \sm \{v_e^i\}$ is also a solution.
If $w \in S$, then
 $v_e^i$ has only two neighbors in $G' \bs \{w\}$, namely $v$ and $w_e^i$, with $w_e^i$ being of degree $1$ in $G' \bs \{w\}$. So if $G' \bs S$ is bipartite, then so is $G'\bs (S \sm \{v_e^i\})$, and thus $S \sm \{v_e^i\}$ is also a solution.
So assume now that $v$ and $w$ are not in $S$.
Then there exists at least one index $i' \in \{0 ,\ldots, k\}$ such that $v_e^{i'}$ and $w_e^{i'}$ are not in $S$. This implies  that in the bipartite graph $G' \bs S$, $v$ and $w$ have to be on  opposite sides of the bipartition of $G'-S$.
We can safely add $v_e^i$ to $G\bs S$ such that the graph remains bipartite by adding $v_e^i$ to the side of the bipartition containing $w$.
So $S \sm \{v_e^i\}$ is also a solution.

By deleting all the vertices of the form $v_e^i$ from $S$, we obtain a set $S'$ such that $S' \subseteq V$ and $|S'| \leq k$.
As we preserve the property in $G'$ that if $\{v,w\} \in E$, with $v$ and $w$ not in $S'$, then $v$ and $w$ should be on opposite sides of the bipartition of  $G' \bs S'$, we have that $S'$ is a solution of \textsc{$(2,0)$-vertex deletion} on $G$.
This concludes the proof.


The {\sf NP}-hardness of {\textsc{Independent (2,1)-Vertex Deletion}} and {\textsc{Independent (2,2)-Vertex Deletion}} follow from the {\sf NP}-hardness of {\textsc{Independent (2,0)-Vertex Deletion}} by applying Lemma~\ref{lemma:rl}. \end{proof}


\subsection{{\sf FPT}-algorithms}
\label{sec:Independent-algos}

We deal with the cases $(2,2)$ and  $(2,1)$  in Theorem~\ref{theorem:ittvd} and  Corollary~\ref{theorem:itovd}, respectively.


\begin{figure}[t]

  \centering
  \begin{tikzpicture}
    \capt{4.7,-0.6}{$L_1$};
    \capt{4.7,0.4}{$R_1$};
    \capt{0,0.2}{$I$};
    \capt{0,1.2}{$R_0$};
    \capt{0,-0.85}{$L_0$};
    \capt{1.5,0.2}{$S'$};

\add{
\capt{5.7, -0.6}{$\ls$};
\capt{5.7, 0.4}{$\rs$};
}

    \capt{0, 2.2}{$G[S]$};
    \capt{4,2.2}{$G \bs S$};

    \draw (0,-0)
    -- +(0,2)
    -- +(2,2)
    -- +(2,0)
    -- +(0,0);

\add{
  \draw[dotted] (6.3,-0.8) -- +(0,2);
}

\draw (0,-0) -- +(0,-1) -- +(2,-1) -- +(2,0) ;

\draw[dotted] (-0.2,1) -- +(5,0) -- +(5,-1) -- +(0,-1) -- +(0,0);
    \draw (3,-0.8)
    -- +(0,2)
    -- +(4,2)
    -- +(4,0)
    -- +(0,0);

    \draw (3,0.2) -- +(4,0);
    \draw (0,1) -- +(2,0);
  \end{tikzpicture}

  \caption{An $(r, \ell)$-partition of $G[S\sm I]$ and $G \bs S$ to solve \textsc{Independent $(r, \ell)$-Vertex Deletion}.\vspace{-.0cm}}
  \label{fig:CSS}
\end{figure}
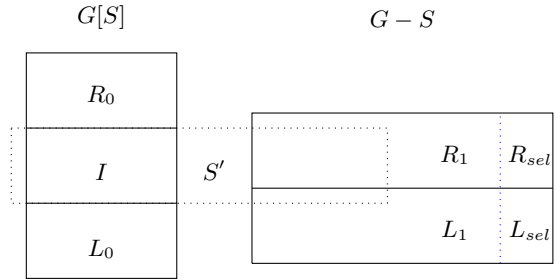

\begin{theorem}\label{theorem:ittvd}
{\textsc{Independent $(2,2)$-Vertex Deletion}} is {\sf FPT}.
\end{theorem}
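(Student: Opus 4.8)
The plan is to follow the same iterative-compression philosophy as in Theorem~\ref{lemma:dttvd}, but to work directly rather than through a disjoint version, since here the solution $S$ must itself induce an independent set and this constraint does not mesh cleanly with the standard compression step. Concretely, I would first invoke the algorithm of Theorem~\ref{lemma:dttvd} for \textsc{$(2,2)$-Vertex Deletion} to obtain (if one exists) a set $S$ with $|S| \le k$ such that $G \bs S$ is a $(2,2)$-graph; if no such set exists then certainly no independent solution of size at most $k$ exists and we report \textsc{No}. Having fixed such an $S$, the point is that $S$ gives us a handle on the structure of $G \bs S$, exactly as in the figure: we fix a $(2,2)$-partition $(R_S, L_S)$ of $G \bs S$, which exists and is computable in polynomial time by~\cite{Bra96}, and we search for an independent solution $S'$ that is laid out relative to this partition.

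The key structural fact I would exploit is Lemma~\ref{lemma:rlpartition}: any true $(2,2)$-partition $(R^*, L^*)$ of $G \bs S^*$, for an optimal independent solution $S^*$, agrees with the fixed partition $(R_S, L_S)$ up to swapping at most $2 r \ell = 8$ vertices. So I would guess the portion $I = S \cap S^*$ of $S$ that remains outside the final $(2,2)$-part (at most $2^{|S|} \le 2^{k+1}$ guesses), guess how the surviving vertices of $S \sm S^*$ split into the $(2,0)$-part $R_0$ and the $(0,2)$-part $L_0$, and guess the swap sets $\ls \subseteq R_S$ and $\rs \subseteq L_S$, each of size at most $4$, giving $R_1 = R_S \cup \rs \sm \ls$ and $L_1 = L_S \cup \ls \sm \rs$; there are only $O(n^8)$ such swap guesses. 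For each guess we must decide whether there is an independent set $S' \subseteq R_1 \cup L_1$ of size at most $k - |I|$ whose removal makes $R_0 \cup (R_1 \sm S')$ a $(2,0)$-graph and $L_0 \cup (L_1 \sm S')$ a $(0,2)$-graph, while ensuring $S' \cup I$ is independent and consistent with the chosen partition.

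The crucial ingredient, and the step I expect to be the main obstacle, is solving this residual covering problem under the independence constraint. Once the guesses fix which block each vertex is cleaned in, making $R_0 \cup (R_1 \sm R')$ a $(2,0)$-graph means making the complement of the corresponding induced subgraph bipartite, i.e. an \textsc{OCT}-type task, and symmetrically for the $L$-side; but because $R'$ must be an \emph{independent} set contained in a prescribed deletion set, this is precisely an instance of \textsc{Restricted Independent OCT}. Here I would appeal to Theorem~\ref{theorem:roct} (Marx \emph{et al.}~\cite{MOR10}), which guarantees that \textsc{Restricted Independent OCT} is {\sf FPT}; applying it (on $G[R_0 \cup R_1]$ with deletion set $R_1$, and on $\overline{G}[L_0 \cup L_1]$ with deletion set $L_1$) for the appropriate budget split $i_0$ and $k - i_0$ yields the two halves of $S'$. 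The delicate points are to verify that an independent set chosen on the $R$-side together with one chosen on the $L$-side remains independent overall—which is handled because the swap sets and the guessed $I$ pin down all cross-edges, and any forbidden edge would contradict the guessed partition—and to confirm that every genuine optimal solution $S^*$ survives \emph{some} combination of guesses, which follows by setting $R_0 = S \cap R^*$, $L_0 = S \cap L^*$, $I = S \cap S^*$, and reading off $\ls, \rs$ from Lemma~\ref{lemma:rlpartition} exactly as in the proof of Theorem~\ref{lemma:dttvd}.

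Putting this together gives an {\sf FPT}-algorithm: $2^{O(k)} \cdot n^{O(1)}$ guesses of $I$, $R_0/L_0$, $\ls$, and $\rs$, each resolved by a constant number of calls to the {\sf FPT}-algorithm for \textsc{Restricted Independent OCT}, so the whole procedure runs in {\sf FPT}-time with parameter $k$. I would defer the precise bound on $f(k)$—which inherits the tower-type dependence $2^{2^{O(k^2)}}$ from the analysis of the Marx \emph{et al.} algorithm—to the separate running-time discussion, and here simply conclude fixed-parameter tractability.
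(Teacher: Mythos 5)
Your overall skeleton is the same as the paper's: compute $S$ via Theorem~\ref{lemma:dttvd}, guess $I = S \cap S^*$, guess the $(2,2)$-partition $(R_0,L_0)$ of $G[S \sm I]$, fix $(R_S,L_S)$, and guess the swap sets $\ls,\rs$ of size at most $4$ using Lemma~\ref{lemma:rlpartition}. The gap is in the last step, where you resolve the residual problem by two parallel calls to \textsc{Restricted Independent OCT} with a budget split. This fails for two concrete reasons. First, on the clique side the reduction imposes the wrong constraint: \textsc{Restricted Independent OCT} on $\overline{G}[L_0 \cup L_1]$ with deletion set $L_1$ returns a set that is independent \emph{in the complement graph}, i.e., a \emph{clique} in $G$ --- the opposite of what you need, since $L'$ must be independent in $G$ to be part of $S'$. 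Second, even with both calls individually correct, nothing enforces independence \emph{across} the pieces: a $(2,2)$-partition puts no restriction whatsoever on edges between $R_1$ and $L_1$, or between $R_1$ and $I$, so your claim that ``the swap sets and the guessed $I$ pin down all cross-edges'' is simply false. An edge between a vertex chosen on the $R$-side and a vertex chosen on the $L$-side (or in $I$) makes $S'$ non-independent, and a budget split $i_0$, $k-i_0$ between two independent subroutine calls has no mechanism to exclude this.

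The paper's proof resolves both issues with one structural observation that your argument is missing: because $G[L_S]$ is a union of at most two cliques and $|\ls|\le 4$, any independent subset $L'$ of $L_1$ has constant size (the paper argues $|L'|\le 2$). Hence $L'$ need not be found by any \textsc{OCT}-type subroutine at all; it is \emph{guessed} outright in polynomial time, together with direct checks that $L'\cup I$ is independent and that $G[(L_0\cup L_1)\sm L']$ is a $(0,2)$-graph. With $I$ and $L'$ both fixed, a \emph{single} call to \textsc{Restricted Independent OCT} on $G[R_0\cup R_1]$ suffices, and the cross-independence is enforced precisely by restricting the deletion set to $D = \{x \in R_1 : \forall y \in I \cup L',\ \{x,y\}\notin E\}$ rather than taking $D = R_1$ as you do: the subroutine then guarantees independence inside $R'$, and the choice of $D$ guarantees independence between $R'$ and $I\cup L'$. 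Both ingredients --- the constant bound on $|L'|$ and the restricted set $D$ --- are essential; without them your algorithm can output non-independent sets or incorrectly report \textsc{No}.
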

\begin{proof}
The proof uses similar ideas than the proof of Theorem~\ref{lemma:dttvd}.
Let $G = (V,E)$ be a graph and let $k$ be an integer.
Let $S$ be a solution of the \textsc{$(2,2)$-vertex Deletion} problem on $(G,k)$.
Theorem~\ref{lemma:dttvd} gives us in {\sf FPT} time such a set $S$, or a report that such a set does not exist.
If there is no solution for \textsc{$(2,2)$-Vertex Deletion}, then {\textsc{Independent ($2$,$2$)-Vertex Deletion}} has no solution either.
So we can assume that such a set  $S$ exists.
Using $S$, we proceed to construct a solution $S'$ of our problem as follows.
We first guess an independent subset $I$ of $S$.
We want to construct a solution $S'$ of \textsc{Independent $(2,2)$-Vertex Deletion} such that $I \subseteq S'$ and $S' \cap S = I$.
If $G[S \sm I]$ is not a $(2,2)$-graph, then our choice of $I$ is wrong.
So assume $G[S \sm I]$ is a $(2,2)$-graph. We guess a  $(2,2)$-partition $(R_0, L_0)$  of $G[S\sm I]$, and we fix a $(2,2)$-partition $(R_S, L_S)$ of $G \bs S$.
We guess $\ls \subseteq R_S$ of size at most $4$ and $\rs \subseteq L_S$ of size at most $4$.
We define $R_1 = R_S \cup \rs \sm \ls$ and $L_1 = L_S \cup \ls \sm \rs$.
By Lemma~\ref{lemma:rlpartition},  there are at most $O(k^8\cdot n^{8})$ choices for $R_0$, $L_0$, \supprOK{$R_1$, and $L_1$}\addOK{$\rs$, and $\ls$}.
We want to find $R' \subseteq R_1$ and $L' \subseteq L_1$ such that $S' = I \cup L' \cup R'$.
A representation of this selection is depicted  in Fig.~\ref{fig:CSS}.
As we want the solution to induce an independent set, at most two elements of $L_1 $ are in $S'$, that is, $|L'| \leq 2$.
We guess these at most two vertices that define $L'$ such that $L' \cup I$ is an independent set and such that $G[(L_0 \cup L_1)\sm L']$ is a $(0,2)$-graph.
If it is not the case then our choice is wrong.
We now have to find $R'$ of size at most $k - |I| - |L'|$.
For this, we apply \addOK{the algorithm for} {\textsc{Restricted Independent OCT}} on $(G[R_0 \cup R_1],D,k-|I| - |L'|)$ with $D = \{x \in R_1  : \forall y \in I \cup L' , \{x,y\} \not \in E\}$.
If it returns a solution $R'$ then we can output the solution $S'=I \cup L' \cup R'$.
If it does not return a solution for any of the guesses of $I$, $R_0$, $L_0$, \supprOK{$R_1$, and $L_1$} \addOK{$\rs$, and $\ls$}, then we return that there is no solution of \textsc{Independent $(2,2)$-Vertex Deletion}.

Now let us argue about the correctness of the algorithm.
If it outputs a set $S'$, then, by construction of the algorithm this set is a solution of {\textsc{Independent $(2,2)$-Vertex Deletion}}.
Indeed, $L_0 \cup L_1 \sm L'$ is a $(0,2)$-graph, $R_0 \cup R_1 \sm R'$ is a $(2,0)$-graph, and $S'$ is an independent set.
Now, assume that our instance of {\textsc{Independent $(2,2)$-Vertex Deletion}} has a solution $S^*$.
Let $(R^*,L^*)$ be an $(2,2)$-partition of $G \bs S^*$.
Then the solution $S^*$ can be found by the algorithm for the guess $I = S \cap S^*$, $R_0 = S \cap R^*$, and $L_0 = S \cap L^*$.
For this choice of $R_0$ and $L_0$, let $(R_S, L_S)$ be the fixed $(2,2)$-partition of $G\bs S$.
Let $G^* = (V^*,E^*)$ be the graph $G \bs (S \cup S^*)$.
Then $(R_s \cap V^*, L_S \cap V^*)$ and $(R^* \cap V^*, L^* \cap V^*)$ are two $(2,2)$-partitions of $G^*$.
By Lemma~\ref{lemma:rlpartition}, we can find $\ls \subseteq R_S$ and $\rs \subseteq L_S$ both of size at most $4$
such that $L_1 \cap V^* = L^* \cap V^*$ and $R_1 \cap V^* = R^* \cap V^*$, with  $R_1 = R_S \cup \rs \sm \ls$ and $L_1 = L_S \cup \ls \sm \rs$.
For this choice of $R_0$, $L_0$, $\rs$, and $\ls$, the algorithm will  define $L' = S^* \cap L_1$.
Then the existence of the solution $S^*$ certifies that
{the algorithm for} {\textsc{Restricted Independent OCT}} on $(G[R_0 \cup R_1],D,k-|I| - |L'|)$,
 with $D = \{x \in R_1  : \forall y \in I \cup L' , \{x,y\} \not \in E\}$,
will return a solution.
Thus, if $S^*$ is a solution of {\textsc{Independent $(2,2)$-Vertex Deletion}}, then there is at least one choice $I$, $R_0$, $L_0$, $\rs$, $\ls$, and $L'$ such that the algorithm returns a solution.
\end{proof}

By combining Lemma~\ref{lemma:rl} and Theorem~\ref{theorem:ittvd}, we obtain the following corollary.
\begin{corollary}
\label{theorem:itovd}
{\textsc{Independent $(2,1)$-Vertex Deletion}} is {\sf FPT}, with the same running time as {\textsc{Independent $(2,2)$-Vertex Deletion}}. 
\end{corollary}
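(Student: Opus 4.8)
The plan is to obtain Corollary~\ref{theorem:itovd} as an immediate consequence of the reduction machinery in Lemma~\ref{lemma:rl} together with the {\sf FPT}-algorithm of Theorem~\ref{theorem:ittvd}. Concretely, I would invoke part~(iii) of Lemma~\ref{lemma:rl} with the specific values $r = 2$ and $\ell = 1$. This gives a polynomial-time reduction from \textsc{Independent $(2,1)$-Vertex Deletion} to \textsc{Independent $(2,2)$-Vertex Deletion}. The key features of that reduction, as established in the lemma, are that it leaves the parameter $k$ unchanged and that it enlarges the instance only by a constant number of vertices (in this case a clique $Q$ of size $r + 2 = 4$ appended to the graph, exploiting that any independent solution can intersect $Q$ in at most one vertex).

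Having this reduction, the second step is simply to feed the resulting instance of \textsc{Independent $(2,2)$-Vertex Deletion} into the {\sf FPT}-algorithm guaranteed by Theorem~\ref{theorem:ittvd}. Since \textsc{Independent $(2,1)$-Vertex Deletion} is a \textsc{Yes}-instance on $(G,k)$ if and only if the constructed instance $(G',k)$ is a \textsc{Yes}-instance of \textsc{Independent $(2,2)$-Vertex Deletion}, correctness follows directly from the correctness of the reduction stated in Lemma~\ref{lemma:rl}.

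For the running-time claim, I would observe that if \textsc{Independent $(2,2)$-Vertex Deletion} runs in time $f(n,k)$, then by the reduction \textsc{Independent $(2,1)$-Vertex Deletion} runs in time $f(n+4,k)$ plus the polynomial time needed to build $G'$. Because only a constant number of vertices is added and the parameter $k$ is preserved, this is asymptotically the same running time as for the $(2,2)$ case, which is exactly the statement of the corollary. There is essentially no technical obstacle here: all the difficulty is already absorbed into Theorem~\ref{theorem:ittvd} (whose core is the structural Lemma~\ref{lemma:rlpartition} and the algorithm for \textsc{Restricted Independent OCT}) and into verifying the reduction in Lemma~\ref{lemma:rl}. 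The only point deserving a line of care is confirming that the additive $O(1)$ change in $n$ does not affect the asymptotic bound on $f$, which is immediate for the single- and double-exponential forms of $f(k) \cdot n^{O(1)}$ arising in these problems.
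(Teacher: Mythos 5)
Your proof is correct and is exactly the paper's argument: the paper obtains this corollary precisely by combining the reduction of Lemma~\ref{lemma:rl}(iii) (with $r=2$, $\ell=1$, using the size-$(r+2)$ clique gadget) with the {\sf FPT}-algorithm of Theorem~\ref{theorem:ittvd}. Your additional remarks on parameter preservation and the $O(1)$ increase in $n$ just make explicit what the paper leaves implicit.
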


\subsection{Analysis of the running time}
\label{sec:time}

In this subsection we provide an upper bound on the running times of the {\sf FPT}-algorithms  for  \textsc{Independent $(2,2)$-Vertex Deletion} and \textsc{Independent $(2,1)$-Vertex Deletion} given by Theorem~\ref{theorem:ittvd} and Corollary~\ref{theorem:itovd}, respectively.
Note that in these algorithms, the only non-explicit running time is the one of the algorithm for \textsc{Restricted Independent OCT} given by Theorem~\ref{theorem:roct}. To obtain this upper bound, we will go through the main ideas of the algorithm of Marx \emph{et al.} \cite{MOR10} for \textsc{Independent OCT}, and then by using the same tools used in  the proof of Lemma~\ref{lemma:roct} we will obtain the same upper bound for the restricted version of \textsc{Independent OCT}.

We need to define the following problem, where an \emph{$s-t$ cut} in a graph $G$ is a set of vertices $C$ such that  $s$ is not connected to $t$ in the graph $G - C$.


\paraprobl
{\textsc{Independent Mincut}}
{A graph $G = (V,E)$, an integer $k$, and two vertices $s,t \in V$.}
{An $s-t$ cut $C \subseteq V\sm \{s,t\}$ such that $|C| \leq k$ and $C$ is an independent set, or a correct report that such a set does not exist.}
{$k$.}

We provide here a sketch of proof of the following simple lemma. We first recall for completeness the definition of treewidth.
A \emph{tree-decomposition} of width $w$ of a graph $G=(V,E)$ is a pair $(T,\sigma)$, where $T$ is a tree and $\sigma = \{ B_t : B_t \subseteq V, t \in V(T) \}$ such that:
\begin{itemize}
\item[$\bullet$] $\bigcup_{t \in V(T)} B_t = V$,
\item[$\bullet$] For every edge $\{u,v\} \in E$ there is a $t \in V(T)$ such that $\{u, v\} \subseteq B_t$,
\item[$\bullet$] $B_i \cap B_k \subseteq B_j$ for all $\{i,j,k\} \subseteq V(T)$ such that $j$ lies on the path $i ,\dots, k$ in $T$, and
\item[$\bullet$] $\max_{i \in V(T)} |B_t| = w +1$.
\end{itemize}

The sets $B_t$ are called \emph{bags}. The \emph{treewidth} of $G$, denoted by ${\bf tw}(G)$, is the smallest integer $w$ such that there is a tree-decomposition of $G$ of width $w$. An \emph{optimal tree-decomposition} is a tree-decomposition of width ${\bf tw}(G)$.

\begin{lemma}
\label{lemma:imtw}
\textsc{Independent Mincut} can be solved in time $3^{{\bf tw}}\cdot n^{O(1)}$, where ${\bf tw}$ stands for the treewidth of the input graph.
\end{lemma}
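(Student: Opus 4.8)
The plan is to give a standard dynamic programming algorithm over a tree-decomposition, so first I would invoke a known result (e.g.\ Bodlaender's algorithm or any constant-factor approximation) to compute a tree-decomposition of $G$ of width $O(\tw)$, and then convert it into a \emph{nice} tree-decomposition with the usual node types (leaf, introduce-vertex, forget-vertex, join), which increases the width only by a constant factor and has $n^{O(1)}$ nodes. The essential observation is that we are looking for an independent set $C \subseteq V \sm \{s,t\}$ that is an $s$--$t$ cut, and both conditions---being an independent set and being an $s$--$t$ cut---can be checked locally along the decomposition provided we remember, for each vertex currently in the bag, which of three roles it plays.

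Concretely, for a bag $B_t$ I would define a \emph{state} to be a function $f : B_t \to \{\mathsf{C}, \mathsf{S}, \mathsf{T}\}$, where $\mathsf{C}$ means ``the vertex belongs to the cut $C$'', and $\mathsf{S}$ (resp.\ $\mathsf{T}$) means ``the vertex lies in the same connected component as $s$ (resp.\ as $t$) in $G - C$''. There are $3^{|B_t|} \leq 3^{\tw + 1}$ states per bag, which is where the $3^{\tw}$ factor originates. For each node $t$ and each state $f$ I would store the minimum size of a partial cut $C$ restricted to the vertices already processed in the subtree rooted at $t$, consistent with $f$, subject to the invariants that (i) no edge has both endpoints labelled $\mathsf{C}$ (independence), (ii) no edge joins an $\mathsf{S}$-vertex to a $\mathsf{T}$-vertex (separation, since such an edge would reconnect $s$ and $t$ in $G-C$), (iii) $s$ is never labelled $\mathsf{T}$ or $\mathsf{C}$ and $t$ is never labelled $\mathsf{S}$ or $\mathsf{C}$. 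I would then write the transition rules for each node type: introduce-vertex extends $f$ by a choice in $\{\mathsf{C},\mathsf{S},\mathsf{T}\}$ while rejecting states that violate (i)--(iii) on the newly introduced edges; forget-vertex takes the minimum over the label of the forgotten vertex; join combines two states that agree on the shared bag, adding the partial sizes and correcting for double-counting of the $\mathsf{C}$-vertices in the bag.

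The correctness argument is the routine verification that these local constraints, enforced over every bag, are equivalent to the global requirement that $C$ is an independent $s$--$t$ cut: because every edge of $G$ appears in some bag (by the definition of tree-decomposition), constraints (i) and (ii) are checked for every edge; and the connectivity labels $\mathsf{S}/\mathsf{T}$ are consistent precisely when $s$ and $t$ end up in different components of $G-C$. The answer to \textsc{Independent Mincut} is read off at the root: we accept if the minimum stored value over all valid states is at most $k$. The running time is the number of nodes, $n^{O(1)}$, times the per-node work, which is $3^{\tw}\cdot n^{O(1)}$ for introduce/forget nodes and $3^{\tw}\cdot \tw^{O(1)}$ for join nodes (matching states from the two children), giving the claimed bound $3^{\tw}\cdot n^{O(1)}$.

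The step I expect to be the main obstacle is handling the connectivity information correctly in the join nodes: naively, tracking which component each bag-vertex belongs to could require a partition-based state and blow up the number of states well beyond $3^{\tw}$. The resolution, and the reason the simple $\{\mathsf{C},\mathsf{S},\mathsf{T}\}$ encoding suffices, is that we do not need to track arbitrary connected components but only whether each vertex is reachable from $s$ or from $t$ in $G-C$; since $s$ and $t$ are fixed terminals and constraint (ii) already forbids any $\mathsf{S}$--$\mathsf{T}$ edge, every component is unambiguously of ``type $\mathsf{S}$'', ``type $\mathsf{T}$'', or irrelevant, and we may safely merge all components of the same type under a single label. This collapses the state space to three labels per vertex and makes the join transition a straightforward coordinatewise agreement check, which is exactly what yields the single-exponential dependence on treewidth.
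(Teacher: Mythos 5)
Your proposal is correct and follows essentially the same route as the paper's (sketched) proof: a dynamic program over the tree-decomposition whose states are the $3^{|B|}$ assignments of bag vertices to the three roles cut / $s$-side / $t$-side, with independence and the absence of edges between the two sides enforced locally as each edge appears in some bag, and the partial cut size tracked. The only cosmetic differences are that the paper records, for each such assignment, the achievable cut sizes $\ell \leq k$ (hence its bound of $3^{\tw}\cdot k$ states) rather than just the minimum, and that in the paper's application the tree-decomposition is supplied by Theorem~\ref{theorem:trt}, so your preliminary step of computing one (and the attendant loss from using an approximate decomposition) does not arise there.
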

\begin{proof}
(Sketch)
For each bag $B$ of the tree-decomposition, we store all quadruples $(S,T,D,\ell)$ such that we  have already found a cut $C'$ of size at most $\ell$ in the explored graph such that $B \cap C' = D$, such that there is no edge between $S$ and $T$, and such that $s \in B$ if and only if $s \in S$ and $t \in B$ if and only if $t \in T$ .
There are at most $3^{{\bf tw}}\cdot k$ such quadruples, and so the lemma follows.
\end{proof}

Note that Lemma~\ref{lemma:imtw} is a special case of the problems covered by the result of Pilipczuk~\cite{Pili11}, who proves that such problems can be solved in time $c^{{\bf tw}} \cdot n^{O(1)}$ for some constant $c >0$.
We also need the following result, where the key idea is to obtain an {\sl equivalent} graph whose treewidth is bounded by a function of $k$.

\begin{theorem}[Marx \emph{et al.}~\cite{MOR10}]
\label{theorem:trt}
Let $G = (V,E)$ be a graph, let $S \subseteq V(G)$, and let $k$ be an integer.
Let $C$ be the set of all vertices of $G$ participating in a minimal $s-t$ cut of size at most $k$ for some $s,t \in S$.
Then there is an algorithm running in time $2^{O(k^2)} \addOK{\cdot |S|^2} \cdot n^{O(1)}$ that computes a graph $G^*$ and a tree-decomposition of $G^*$ of width at most  $2^{O(k^2)} \cdot |S|^2$  having the following properties:
\begin{itemize}
\item[$\bullet$] $C \cup S \subseteq V(G^*)$,
\item[$\bullet$] For every $s,t \in S$, a set $K \subseteq V(G^*)$ with $|K| \leq k$ is a minimal $s-t$ cut of $G^*$ if and only if $K \subseteq C \cup S$ and $K$ is a minimal $s-t$ cut of $G$,
\item[$\bullet$] The treewidth of $G^*$ is at most $2^{O(k^2)}\addOK{\cdot |S|^2}$, and
\item[$\bullet$] For any $K \subseteq C$, $G^*[K]$ is isomorphic to $G[K]$.
\end{itemize}
\end{theorem}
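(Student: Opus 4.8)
The plan is to follow the treewidth-reduction strategy and reduce the multi-terminal statement to a per-pair analysis. First I would handle the pairs in $S$ one at a time: for a fixed $s,t \in S$, let $C_{s,t}$ be the set of vertices lying on some minimal $s$--$t$ separator of size at most $k$, so that $C = \bigcup_{\{s,t\} \subseteq S} C_{s,t}$ is a union of at most $\binom{|S|}{2}$ such sets. Using Menger's theorem and maximum-flow computations, one can decide in polynomial time whether a given vertex $v$ participates in a minimal $s$--$t$ separator of size at most $k$ (for instance by forcing $v$ into the separator via vertex-splitting and testing the resulting max-flow value), and hence compute each $C_{s,t}$, and therefore $C$, in polynomial time.

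The core technical ingredient is to bound the treewidth of the \emph{torso} of $G$ on $C_{s,t} \cup \{s,t\}$, that is, the graph on this vertex set in which two vertices are made adjacent whenever they are joined by a path whose internal vertices all lie outside the set. Here I would invoke the theory of \emph{important separators}: there are at most $4^k$ important $s$--$t$ separators of size at most $k$, they can be enumerated in time $2^{O(k)} \cdot n^{O(1)}$, and every minimal $s$--$t$ separator of size at most $k$ is sandwiched between important separators taken from the $s$-side and from the $t$-side. Ordering the at most $4^k$ important separators by the reachable region they define yields a laminar, layered structure on $C_{s,t}$; translating this structure into a tree-decomposition bounds the treewidth of the single-pair torso by a function of $k$ alone.

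With the per-pair bound in hand, I would construct $G^*$ by taking $W = C \cup S$ and replacing each connected component $D$ of $G - W$ by a bounded-size gadget on $N(D)$; concretely, the torso operation adds a clique on $N(D)$ for each such component. By the very definition of $C$, no minimal $s$--$t$ separator of size at most $k$ (for $s,t \in S$) uses an interior vertex of any component $D$, so these replacements neither create nor destroy minimal $s$--$t$ separators of size at most $k$ among the terminals, and the clique gadgets guarantee that every minimal $s$--$t$ cut of size at most $k$ in $G^*$ lies inside $C \cup S$; this yields the second property, while the first property ($C \cup S \subseteq V(G^*)$) is immediate. Since the torso edges are only added inside the neighborhoods $N(D)$ and never between two vertices of $C$ that were nonadjacent, the induced subgraph $G^*[K]$ coincides with $G[K]$ for every $K \subseteq C$, giving the fourth property. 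Finally, gluing the per-pair decompositions along the shared terminals $S$ over the at most $\binom{|S|}{2}$ pairs produces a tree-decomposition of $G^*$ of width $2^{O(k^2)} \cdot |S|^2$ and, simultaneously, the claimed running time for the whole construction.

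The main obstacle is the treewidth bound for a single pair, i.e.\ showing that the torso on all vertices of minimal $s$--$t$ separators of size at most $k$ has width bounded purely in terms of $k$. This is exactly where the nontrivial structural theory of important separators is needed, and care is required to argue that the layered decomposition built from the $4^k$ important separators actually covers \emph{every} minimal separator of size at most $k$, not merely the important ones. A secondary difficulty is combining the per-pair decompositions without the treewidth (or the cut structure) blowing up: one must check that the clique gadgets introduced to simplify $G - W$ for the sake of one pair do not spuriously alter the minimal-cut structure for another pair, which again rests on the fact that the interior vertices of the contracted components are irrelevant for all terminal pairs simultaneously.
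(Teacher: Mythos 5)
A framing remark first: the paper you were given does \emph{not} prove this theorem---it imports it verbatim from Marx \emph{et al.}~\cite{MOR10} (and only analyzes its running time later, in Subsection~\ref{sec:time})---so your attempt has to be measured against the proof in~\cite{MOR10}. Against that proof, your proposal has a genuine gap at exactly the place you yourself flag as the core difficulty: the claim that important separators structure $C_{s,t}$. The pushing lemma for important separators only says that every minimal $s$-$t$ separator $X$ of size at most $k$ is dominated \emph{region-wise} (there is an important $X'$ with $|X'|\leq |X|$ and $R(s,X)\subseteq R(s,X')$); it does not say that the vertices of $X$ lie in important separators, nor that the important separators induce any decomposition of $C_{s,t}$. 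Concretely, take $k=2$ and a long ``ladder'' $s,\{a_1,a_2\},\{b_1,b_2\},\ldots,\{z_1,z_2\},t$ in which consecutive levels are completely joined. Every level is a minimal $s$-$t$ separator of size $2$, so $C_{s,t}$ contains all of the (unboundedly many) level vertices; yet the only important separator from the $s$-side is $\{z_1,z_2\}$ and from the $t$-side $\{a_1,a_2\}$. All intermediate levels lie strictly between these two important separators, and the important-separator family (which, incidentally, need not be laminar) gives no decomposition of that region whatsoever, so no bounded-width tree-decomposition of the torso follows. What~\cite{MOR10} actually do is different, and it is the heart of their proof: they exploit the chain structure of \emph{minimum} $s$-$t$ separators, cut the graph into bricks between consecutive minimum separators, and recurse inside each brick on separators of strictly smaller \emph{excess} $k-\lambda(s,t)$; the $2^{O(k^2)}$-type bound on the single-pair torso comes out of this recursion on the excess. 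Your argument identifies no analogous decreasing quantity, so the single-pair treewidth bound---the nontrivial part of the theorem---remains unproven.

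The second gap is your construction of $G^*$: replacing each component $D$ of $G-(C\cup S)$ by a clique on $N(D)$ (the torso) contradicts the fourth property. Torso edges are precisely edges between vertices of $N(D)\subseteq C\cup S$ that may be nonadjacent in $G$, so your assertion that they are ``never between two vertices of $C$ that were nonadjacent'' is exactly backwards; with cliques, $G^*[K]$ will in general contain edges that $G[K]$ lacks. The construction in~\cite{MOR10} avoids this by replacing each component $D$ not by a clique but by a bounded gadget that adds no edge inside $C\cup S$: for instance, $k+1$ twin copies of a single vertex adjacent to all of $N(D)$, so that no separator of size at most $k$ can split $N(D)$, minimal small separators are preserved exactly, and the treewidth exceeds that of the torso by at most one. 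Finally, your last step (``gluing the per-pair decompositions along the shared terminals'') is also not immediate: the sets $C_{s,t}$ for different pairs overlap arbitrarily, and the torso of their union contains edges, between vertices coming from different pairs, that belong to no single-pair torso; \cite{MOR10} need a separate lemma stating that the treewidth of the torso of a union of sets is bounded by (roughly) the sum of the treewidths of the individual torsos, and it is that lemma which produces the $|S|^2$ factor in the statement. Your polynomial-time computation of $C$ by flow arguments is a reasonable preprocessing step and is not the issue; the two gaps above are essential.
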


\begin{lemma}\label{lem:Indep-Mincut}
\textsc{Restricted Independent Mincut} can be solved in time $2^{2^{O(k^2)}}\cdot n^{O(1)}$.
\end{lemma}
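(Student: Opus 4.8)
The plan is to define \textsc{Restricted Independent Mincut} in the natural way (an instance $(G,D,k,s,t)$ asking for an independent $s$-$t$ cut $C \subseteq D \sm \{s,t\}$ of size at most $k$), and then to obtain its algorithm by essentially combining three ingredients already available in the excerpt: the gadget reduction from the proof of Lemma~\ref{lemma:roct} that enforces the restriction ``$C \subseteq D$'', the bounded-treewidth equivalent graph of Theorem~\ref{theorem:trt}, and the treewidth-based dynamic programming of Lemma~\ref{lemma:imtw}. First I would handle the restriction: for each vertex $v \in V \sm D$ we make $k+1$ twin copies with the same neighbourhood, exactly as in the proof of Lemma~\ref{lemma:roct}. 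Since any $s$-$t$ cut of size at most $k$ cannot delete all $k+1$ copies of a single vertex, no vertex outside $D$ can be ``used'' in a minimal cut of the modified graph, so restricted cuts of the original instance correspond precisely to ordinary independent cuts of the modified instance. This reduction is polynomial and leaves the parameter $k$ unchanged.

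Having reduced to the unrestricted \textsc{Independent Mincut} on the modified graph, I would next apply Theorem~\ref{theorem:trt} with $S = \{s,t\}$ to replace the graph by an equivalent graph $G^*$ together with a tree-decomposition of $G^*$ of width $2^{O(k^2)} \cdot |S|^2 = 2^{O(k^2)}$. The crucial point is that the last property of Theorem~\ref{theorem:trt}, namely that $G^*[K]$ is isomorphic to $G[K]$ for every $K \subseteq C$, guarantees that the \emph{independence} of a candidate cut is preserved between $G$ and $G^*$: a minimal $s$-$t$ cut of size at most $k$ lives inside $C \cup S$ and induces the same subgraph in both graphs, so $K$ is an independent set in $G$ if and only if it is an independent set in $G^*$. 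Thus finding an independent $s$-$t$ cut of size at most $k$ in $G$ is equivalent to finding one in $G^*$, whose treewidth is bounded by $2^{O(k^2)}$.

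Finally I would run the dynamic programming of Lemma~\ref{lemma:imtw} on $G^*$. By that lemma it solves \textsc{Independent Mincut} in time $3^{\tw} \cdot n^{O(1)}$, and substituting $\tw(G^*) = 2^{O(k^2)}$ yields a running time of $3^{2^{O(k^2)}} \cdot n^{O(1)} = 2^{2^{O(k^2)}} \cdot n^{O(1)}$, which is exactly the claimed bound. The time spent constructing $G^*$ is $2^{O(k^2)} \cdot n^{O(1)}$ by Theorem~\ref{theorem:trt} and is therefore absorbed into the double-exponential term.

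The main obstacle — and the step that needs the most care in the write-up — is verifying that the equivalence of Theorem~\ref{theorem:trt} is compatible with \emph{all three} side conditions simultaneously: the restriction $C \subseteq D$, the size bound, and the independence requirement. The restriction is handled entirely before invoking Theorem~\ref{theorem:trt}, so on $G^*$ we only ever need to reason about independence and size, and for those the isomorphism $G^*[K] \cong G[K]$ on subsets of $C$ is precisely what we need. One subtlety is that Theorem~\ref{theorem:trt} characterises \emph{minimal} cuts of size at most $k$; since any $s$-$t$ cut of size at most $k$ contains a minimal one of no larger size, and an independent set remains independent under taking subsets, it suffices to search for minimal independent cuts, so restricting attention to minimal cuts costs nothing. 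Once these correspondences are spelled out, the running-time bound follows by direct substitution.
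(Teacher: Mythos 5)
Your proposal is correct and follows essentially the same route as the paper's proof: handle the restriction $C \subseteq D$ via the vertex-copying gadget of Lemma~\ref{lemma:roct}, apply Theorem~\ref{theorem:trt} with $S=\{s,t\}$ to pass to an equivalent graph $G^*$ of treewidth $2^{O(k^2)}$, and run the dynamic programming of Lemma~\ref{lemma:imtw} on $G^*$. The only difference is presentational: where the paper justifies the equivalence for \emph{independent} cuts by deferring to the proof of~\cite[Theorem 3.1]{MOR10}, you derive it directly from the stated properties of Theorem~\ref{theorem:trt} (the isomorphism $G^*[K]\cong G[K]$ plus the reduction to minimal cuts), which is a sound and slightly more self-contained way to phrase the same step.
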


\begin{proof}
We first deal with \textsc{Independent Mincut}. Let $G = (V,E)$ be a graph and $k$ be an integer.
Let $G^*$ be the graph satisfying the requirements of Theorem~\ref{theorem:trt} for $S = \{s,t\}$.
Following the proof of \cite[Theorem 3.1]{MOR10}, it follows that $(G,s,t,k)$ has a solution of \textsc{Independent Mincut} if and only if $(G^*,s,t,k)$ has one.
We can now apply Lemma~\ref{lemma:imtw}, and solve \textsc{Independent Mincut} on $(G^*,s,t,k)$ in time $3^{{\bf tw}(G^*)}\cdot n^{O(1)}$.
By Theorem~\ref{theorem:trt}, $\tw(G^*) = 2^{O(k^2)}$ and the lemma follows.

Finally, note that the restricted version of \textsc{Independent Mincut} can be solved within the same running time by making enough copies of each ``undesired'' vertex, as in the proof of Lemma~\ref{lemma:roct}.
\end{proof}



\begin{theorem}
\textsc{Restricted Independent OCT} can be solved in time $2^{2^{O(k^2)}}\cdot n^{O(1)}$.
\end{theorem}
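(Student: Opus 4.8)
The plan is to reduce \textsc{Restricted Independent OCT} to \textsc{Restricted Independent Mincut}, exactly mirroring the way that \textsc{OCT} is solved via \textsc{Mincut} in the iterative-compression/Reed--Smith--Vetta framework, and then invoke Lemma~\ref{lem:Indep-Mincut} for the final running time. First I would set up the iterative-compression scaffold: given $(G,D,k)$, I process the vertices in order and maintain a running solution, so that at each step I only need to solve the \emph{disjoint} version, i.e.\ given an independent odd cycle transversal $W \subseteq D$ of size at most $k+1$ I look for an independent odd cycle transversal $S \subseteq D$ with $|S| \le k$ that is disjoint from $W$. (Here the hereditary property is bipartiteness, so the general facts of Section~\ref{ap:iterative-compression} apply once we track the extra independence and $D$-membership constraints.) The graph $G - W$ is bipartite, say with bipartition $(B_1, B_2)$ on each connected component.

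The core of the reduction is the standard guessing step on the given transversal $W$. I would guess, for the sought solution $S$, how the vertices of $W \setminus S$ split between the two sides of the target bipartition of $G-S$; that is, I guess a partition of $W$ into $(W_L, W_R)$ together with which vertices of $W$ end up in $S$. For each such guess, vertices assigned to $W_L$ must lie on the left side and those in $W_R$ on the right side of the final bipartition. One then builds an auxiliary graph by ``doubling'' $G-W$: each vertex $v$ is split into $v_L$ and $v_R$, edges are added so that staying on the same side is forbidden (this is the usual construction making an $s$--$t$ cut correspond to an odd cycle transversal), and the guessed sides of $W$ are contracted to two terminals $s$ and $t$. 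A minimal independent $s$--$t$ cut of size at most $k$ in this auxiliary graph then corresponds to an independent odd cycle transversal of $G$ of the required size that respects the guess; the independence of the cut must be transported back to independence of the transversal in $G$, which holds because the doubling construction preserves non-adjacency on the relevant pairs. Crucially, the restriction $S \subseteq D$ is enforced by applying the same copying gadget as in the proof of Lemma~\ref{lemma:roct}: every vertex outside $D$ is replaced by $k+1$ identical copies, so that no minimal cut of size $\le k$ can ever select one, thereby forbidding those vertices from the solution.

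With this reduction in hand, the running time follows by bookkeeping. There are at most $2^{|W|} = 2^{O(k)}$ guesses for the partition of $W$ and its intersection with $S$; for each guess we call the algorithm for \textsc{Restricted Independent Mincut} once, which by Lemma~\ref{lem:Indep-Mincut} runs in time $2^{2^{O(k^2)}}\cdot n^{O(1)}$. The iterative-compression outer loop multiplies by another $2^{O(k)}\cdot n$ factor (summing binomial guesses over the $n$ prefixes, as in Corollary~\ref{coro:disjoint}). Since $2^{O(k)}$ is dominated by $2^{2^{O(k^2)}}$, the total is $2^{2^{O(k^2)}}\cdot n^{O(1)}$, as claimed. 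The main obstacle I expect is verifying correctness of the mincut--to--transversal correspondence \emph{in the independent setting}: I must check that minimality of the $s$--$t$ cut really yields a valid two-sided bipartition of $G - S$ consistent with the guessed assignment of $W$, and that independence of the cut set in the doubled graph translates faithfully into independence of $S$ in $G$, rather than merely independence among the split copies. This is the step where the otherwise routine \textsc{OCT}-via-\textsc{Mincut} argument needs the extra care demanded by the independence constraint, and it is where I would spend the bulk of the detailed proof.
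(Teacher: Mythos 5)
Your overall strategy matches the paper's: take a known odd cycle transversal, guess how its vertices split between the solution and the two sides of the target bipartition, reduce the rest to an independent $s$--$t$ cut problem solved via Lemma~\ref{lem:Indep-Mincut}, and enforce the restriction to $D$ with the copying gadget of Lemma~\ref{lemma:roct}. The only structural difference in scaffolding is that the paper does not use iterative compression at all: it runs an ordinary (non-independent) \textsc{OCT} algorithm once to obtain a transversal $X$, and lets the sought independent transversal intersect $X$ (the guess $Y \subseteq X$ encodes $X \setminus Y = Z \cap X$). This also avoids two incorrect claims in your setup: the compression set $W = S_i \cup \{v_{i+1}\}$ is in general neither an independent set nor a subset of $D$; neither property is needed, since all the argument uses is that $G - W$ is bipartite.

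The genuine gap is in the auxiliary-graph construction and the cut/transversal correspondence, i.e., precisely the step you defer as the ``main obstacle.'' You propose to double the vertices of $G - W$ and contract the guessed sides of $W$ into terminals $s$ and $t$. The paper (following~\cite{RSV04}) does the opposite: it doubles the \emph{transversal} vertices ($x_1, x_2$ for each $x \in X$), guesses a valid partition $(Y_A, Y_B)$ of the copies, and looks for the cut $C$ among the \emph{original} vertices of the bipartite part $S = V \setminus X$. This difference is exactly what makes the independence transfer work: in the paper's construction $G'[C] = G[C]$, so an independent cut in $G'$ is literally an independent set in $G$, and Claim~\ref{claim:1} then gives the equivalence with independent odd cycle transversals. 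In your construction, two copies $u_L, v_L$ of adjacent vertices $u, v$ of $G - W$ are in general \emph{not} adjacent in the doubled graph (the doubling joins opposite sides only), so an ``independent'' cut there may select copies of adjacent vertices, and the transversal it induces in $G$ need not be independent. So the verification you postpone is not routine --- as described, the construction fails, and repairing it essentially amounts to switching to the paper's construction. A second, smaller omission: independence of the full solution $C \cup (W \cap S_{\text{new}})$ also requires $C$ to avoid the neighborhood of the kept transversal vertices; your copying gadget enforces only $C \subseteq D$, whereas the paper invokes the \emph{restricted} version of \textsc{Independent Mincut} a second time precisely to exclude the neighbors of $X \setminus Y$.
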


\begin{proof} In the following, we give a proof for  \textsc{Independent OCT}. In order to obtain the algorithm for \textsc{{Restricted} Independent OCT}, we just take again, as we did for \textsc{Restricted OCT} in Lemma~\ref{lemma:roct}, a larger instance by making enough copies of each ``undesired'' vertex, and apply \textsc{Independent OCT}.

Let $G=(V,E)$ be a graph and let $X$ be a solution of \textsc{OCT} on $(G,k)$, which we can assume to exist. Let $(S_1, S_2)$ be a partition of $G \bs X$ into two independent sets.
We define an auxiliary graph $G' = (V', E')$ as defined in \cite{RSV04}.
So we have $V' = (V \sm X) \cup \{x_1,x_2 : x \in X\}$ and
$E' = \{\{v,w\} \in E : v,w \in V\setminus X\} \cup \{\{y,x_{3-i}\} : y \in S_i, x \in X, i \in \{1,2\}, \{x,y\} \in E\} \cup \{\{x_1,y_2\}, \{x_2,y_1\} : x,y \in X, \{x,y\} \in E\}$.
Given $Y \subseteq X$, we say that a partition of $Y'= \{y_1,y_2 : y \in Y\}$ into two sets ($Y_A$, $Y_B$) is \emph{valid} if for all $ y \in Y$, exactly one of $y_1,y_2$ is in $Y_A$.
We let $S = S_1 \cup S_2$.

\vspace{.2cm}
To continue, we need the next reformulation of \cite[Lemma 1]{RSV04} and its proof.
\begin{claimN}\label{claim:1}
There is an independent odd cycle transversal $Z$ of size at most $k$ in $G$ if and only if there exists $Y\subseteq X$ and a valid partition $(Y_A,Y_B)$ of $Y'$ such that there is an independent mincut $C\subseteq S$ that separates 
$Y_A$ from 
$Y_B$ in $G'$ and such that $Z = C \cup (X \setminus Y)$ is an independent set of size at most $k$ in $G$.
\end{claimN}
\begin{proof} ($\Rightarrow$)
Let $Z$ be an independent odd cycle transversal of size at most $k$ in $G$.
We assume that $Z$ is of minimum size and that its removal produces two independent sets $S^Z_1$ and $S^Z_2$. Let $K = Z \cap X$, let ${J} = Z \setminus K$, and let $Y = X \setminus K$.
We define $(Y_A,Y_B)$, a valid partition of $Y'$ such that
$Y_A = \{y_1 : y \in Y \cap S_1^Z\} \cup \{y_2 : y \in Y \cap S_2^Z\}$ and $Y_B = \{y_2 : y \in Y \cap S_1^Z\} \cup \{y_1 : y \in Y \cap S_2^Z\}$.

We claim that $J$ is a cutset of $G'[Y_A\cup Y_B \cup S]$ separating $Y_A$ from $Y_B$.
Take a minimal path $P$ from $Y_A$ to $Y_B$ in $G'[(Y_A \cup Y_B \cup S) \setminus  J]$.
Let $u$ and $v$ be the two endpoints of $P$.
By minimality of $P$, $P \cap (Y_A \cup Y_B) = \{u,v\}$.
We assume without loss of generality that either $u,v \in Y \cap S_1^Z$ or $u \in Y \cap S_1^Z$ and $v \in Y \cap S_2^Z$.
In the former case, we have that $u = y_1$ for some $y \in Y$ and $v = w_2$ for some $w \in Y$.
As, by construction, $G'[Y_A\cup Y_B \cup S]$ is bipartite and $y_1$ and $w_2$ are on opposite sides of the bipartition of $G'[(Y_A\cup Y_B \cup S)\sm J]$,
necessarily $P$ has odd size.
But as $u,v \in S_1^Z$,
$u$ and $v$ are on the same side of the bipartition of $G \bs Z$, and so of $G'[(Y_A\cup Y_B \cup S)\sm J]$ as well.
Thus $P$ should have even size, a contradiction.
We obtain a similar contradiction in the latter case.

($\Leftarrow$) Under the condition of  Claim~\ref{claim:1}, $Z$ is an independent set, and we need to prove that it is an odd cycle transversal as well.
Assume that there is an odd cycle $O$ in $G \bs Z$. Then by definition of $X$, $O$ intersects $X$ at least once.
Let $O^0, \ldots, O^{m-1}$ be the $m$ times $O$ intersects $X$ and we define $O^m = O^0$.
We have that $O^i \not = O^j$ for all $i <j < m$.
For each $i \in \{0,\ldots, m-1\}$, let $P_i$ be the path from $O^i$ to $O^{i+1}$.
As $O$ never intersects $Z$, then in $G'$ the path $P_i$ never goes from $Y_A$ to $Y_B$.
It means that for each $i$ such that $P_i$ is of even size, $O^{i}_1$ and $O^{i+1}_1$ are in the same set $Y_A$ or $Y_B$,
and for each $i$ such that $P_i$ is of odd size, $O^{i}_1$ and $O^{i+1}_2$ are in the same set $Y_A$ or $Y_B$.
But $O$ is an odd cycle, so there is an odd number of paths $P_i$ such that $P_i$ is of odd size.
We deduce that such an odd cycle $O$ cannot exist, implying that $Z$ is an independent odd cycle transversal. \end{proof}

We now apply \addOK{the algorithm for} \textsc{Independent Mincut} for all $Y \subseteq X$ and all valid partitions $(Y_A,Y_B)$ of $Y'$.
Note that we need to consider a restricted version of \textsc{Independent Mincut} because we do not want the neighborhood of $X \setminus Y$ to be in the solution.
By Claim~\ref{claim:1}, if we obtain a solution, then we have found our independent odd cycle transversal, and otherwise we can safely return that such a set does not exist. The claimed running time follows from Lemma~\ref{lem:Indep-Mincut}.
\end{proof}

By using the same argument of Lemma~\ref{lemma:roct}, we obtain the following corollary.

\begin{corollary}\label{cor:run-time}
\textsc{Restricted Independent OCT} can be solved in time $2^{2^{O(k^2)}}\cdot n^{O(1)}$, and therefore \textsc{Independent $(r,\ell)$-Vertex Deletion} can also be solved in time $2^{2^{O(k^2)}}\cdot n^{O(1)}$ for $r= 2$ and $\ell \in \{0,1,2\}$.
\end{corollary}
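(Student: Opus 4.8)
The plan is to treat the two assertions of the corollary separately. The first assertion---that \textsc{Restricted Independent OCT} runs in time $2^{2^{O(k^2)}}\cdot n^{O(1)}$---is exactly the content of the theorem immediately preceding this corollary, whose proof already handles the restricted version by the vertex-copying device of Lemma~\ref{lemma:roct}; I would therefore simply invoke it (or, if one prefers to re-derive it, apply that copy-trick to reduce \textsc{Restricted Independent OCT} to \textsc{Independent OCT} at no asymptotic cost). The substance of the corollary is thus the second assertion, and the approach is to substitute this running time for the only non-explicit subroutine appearing in the \textsf{FPT}-algorithms already designed for the three relevant pairs $(2,0)$, $(2,1)$, and $(2,2)$.

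For the case $(2,2)$ I would revisit the algorithm of Theorem~\ref{theorem:ittvd} and bound its total cost as (number of guesses)~$\times$~(cost per guess). First one computes a set $S$ of size at most $k+1$ via Theorem~\ref{lemma:dttvd} in time $3.31^k\cdot n^{O(1)}$. The algorithm then ranges over all independent subsets $I\subseteq S$ (at most $2^{|S|}\le 2^{k+1}$ of them) and, as already established inside the proof of Theorem~\ref{theorem:ittvd}, over the $O(k^8\cdot n^8)$ choices of $(R_0,L_0,\rs,\ls)$ together with the $O(n^2)$ choices of the at most two vertices forming $L'$. Multiplying, the number of guesses is $2^{O(k)}\cdot n^{O(1)}$. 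For each guess the dominant operation is a single call to \textsc{Restricted Independent OCT}, costing $2^{2^{O(k^2)}}\cdot n^{O(1)}$ by the first assertion, all remaining checks being polynomial. Since $2^{O(k)}\cdot n^{O(1)}\cdot 2^{2^{O(k^2)}}\cdot n^{O(1)}=2^{2^{O(k^2)}}\cdot n^{O(1)}$, the bound for $(2,2)$ follows.

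The remaining two cases are then immediate. For $(2,1)$, Corollary~\ref{theorem:itovd} reduces \textsc{Independent $(2,1)$-Vertex Deletion} to \textsc{Independent $(2,2)$-Vertex Deletion} with the parameter unchanged (Lemma~\ref{lemma:rl}(iii)), introducing only $n^{O(1)}$ overhead, so the same bound holds. For $(2,0)$, the problem is precisely \textsc{Independent OCT}, equivalently \textsc{Restricted Independent OCT} with $D=V$, which is already covered by the first assertion. The one point requiring care---and the main obstacle---is the running-time bookkeeping: one must verify that every source of overhead (the iterative-compression step producing $S$, the $2^{O(k)}$ guesses over subsets of $S$, and all polynomial factors in $n$) is dominated by, and hence absorbed into, the double-exponential factor $2^{2^{O(k^2)}}$. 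The bound $|S|\le k+1$ is exactly what keeps the guessing single-exponential in $k$, which is what makes this absorption legitimate.
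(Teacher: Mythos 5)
Your proposal is correct and follows essentially the same route as the paper: the first assertion is exactly the preceding theorem (whose proof already absorbs the restricted version via the vertex-copying trick of Lemma~\ref{lemma:roct}), and the second follows by plugging that bound into the algorithms of Theorem~\ref{theorem:ittvd} and Corollary~\ref{theorem:itovd}, with the case $(2,0)$ being \textsc{Restricted Independent OCT} itself. The paper leaves the bookkeeping implicit (noting earlier that this subroutine is the only non-explicit cost), whereas you spell out that the $3.31^k$ compression step, the $2^{O(k)}$ guesses of $I\subseteq S$, and the polynomial guesses of $(R_0,L_0,\rs,\ls,L')$ are all absorbed into $2^{2^{O(k^2)}}\cdot n^{O(1)}$ --- which is exactly the intended argument.
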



Note that the previous results would be automatically improved if one could find a faster algorithm for \textsc{Independent Mincut}.

\vspace{.5cm}

\noindent \textbf{Acknowledgement}. We would like to thank the anonymous referees  for helpful remarks that improved and simplified the presentation of the manuscript.

\bibliographystyle{abbrv}
\bibliography{bib-dichotomy}

\end{document}